\documentclass{llncs}

\usepackage{multicol}

\usepackage{amsmath}
\usepackage{amsfonts}
\usepackage{amssymb}

\usepackage{graphicx}
\usepackage{epic}
\usepackage{eepic}
\usepackage{epsfig,float}
\usepackage{verbatim}
\usepackage{pdfsync}

\pagestyle{plain}
\DeclareGraphicsRule{.tif}{png}{.png}{`convert #1 `dirname #1`/`basename #1 .tif`.png}

\newcommand{\ol}{\overline}
\newcommand{\eps}{\varepsilon}
\newcommand{\emp}{\emptyset}

\newcommand{\Sig}{\Sigma}

\newcommand{\noin}{\noindent}

\newcommand{\bi}{\begin{itemize}}
\newcommand{\ei}{\end{itemize}}
\newcommand{\be}{\begin{enumerate}}
\newcommand{\ee}{\end{enumerate}}
\newcommand{\bd}{\begin{description}}
\newcommand{\ed}{\end{description}}
\newcommand{\bq}{\begin{quote}}
\newcommand{\eq}{\end{quote}}

\newcommand{\cA}{{\mathcal A}}

\newcommand{\cD}{{\mathcal D}}

\newcommand{\cP}{{\mathcal P}}

\newcommand{\cS}{{\mathcal S}}

\newcommand{\one}{{\mathbf 1}}

\newcommand{\lraL}{{\hspace{.1cm}{\leftrightarrow_L} \hspace{.1cm}}}

\newcommand{\qedb}{\hfill$\blacksquare$}

\title{Symmetric Groups and Quotient Complexity of Boolean 
Operations\thanks{This work was supported by the Natural Sciences and Engineering Research Council of Canada under grants No.~611456 and~OGP0000871, 
by  the  European Regional Development Fund through the programme COMPETE,    and by the Portuguese Government through the FCT under projects    PEst-C/MAT/UI0144/2011 and CANTE-PTDC/EIA-CCO/101904/2008.
}
}

\author{Jason Bell\inst{1} \and
Janusz~Brzozowski\inst{2} \and 
Nelma Moreira\inst{3} \and Rog\'erio Reis\inst{3}}

\authorrunning{Bell, Brzozowski, Moreira, Reis}   

\institute{Department of Pure Mathematics, 
University of Waterloo, \\
Waterloo, ON, Canada N2L 3G1\\
\{{\tt jpbell@uwaterloo.ca}\}  \and 
David R. Cheriton School of Computer Science, University of Waterloo, \\
Waterloo, ON, Canada N2L 3G1\\
\{{\tt brzozo@uwaterloo.ca}\}
\and
CMUP \& DCC, Faculdade de Ci{\^e}ncias da Universidade do Porto,\\
Rua do Campo Alegre, 4169--007 Porto Portugal\\
\{{\tt \{nam,rvr\}@dcc.fc.up.pt}\}
}

\begin{document}

\maketitle
\begin{abstract}
The quotient complexity of a regular language $L$ is the number of left quotients of $L$, which  is the same as the state complexity of $L$.
Suppose that $L$ and $L'$ are binary regular languages with quotient complexities $m$ and $n$,  and that the transition semigroups of the minimal deterministic automata accepting $L$ and $L'$ are  the symmetric groups $S_m$ and $S_n$ of degrees $m$ and $n$, respectively.
Denote by $\circ $ any binary boolean operation that is not a constant and not a function of one argument only.
For $m,n\ge 2$ with $(m,n)\not \in \{(2,2),(3,4),(4,3),(4,4)\}$ we prove that the quotient complexity of $L\circ L'$ is $mn$ if and only either (a) $m\not= n$ or (b) $m=n$ and the bases (ordered pairs of generators) of $S_m$ and $S_n$ are not conjugate.  For $(m,n)\in \{(2,2),(3,4),(4,3),(4,4)\}$ we give examples to show that this need not hold.
In proving these results we generalize the notion of uniform minimality to direct products of automata. We also establish a non-trivial connection between complexity of boolean operations and group theory.
\medskip

\noin
{\bf Keywords:}
boolean operation, quotient complexity, regular language, state complexity,  symmetric group, transition semigroup 
\end{abstract}

\section{Motivation}

The \emph{left quotient,} or simply \emph{quotient,} of a regular language $L$ over an alphabet $\Sig$ by a word $w\in\Sig^*$
is the regular language $w^{-1}L=\{x\in\Sig^*\colon wx\in L\}$.
It is well known that a language is regular if and only if it has a finite number of quotients. 
Consequently, the number of quotients of a regular language, its \emph{quotient complexity,} is a natural measure of complexity of the language.
Quotient complexity is also known as \emph{state complexity,} which is the number of states in the complete minimal \emph{deterministic finite automaton} (\emph{DFA}) recognizing the language.
We prefer quotient complexity because it is a language-theoretic concept, whereas state complexity involves a completely different object, the DFA.
State complexity was first studied by Maslov~\cite{Mas70} in 1970, but it only attracted much interest after 1994 as a result of the paper by Yu, Zhuang and K.\ Salomaa~\cite{YZS94}.
For more details about state complexity see the survey by Yu~\cite{Yu01}. 
The quotient point of view was introduced in 2010 by Brzozowski~\cite{Brz10}.
In this paper we usually refer to quotient/state complexity simply as \emph{complexity}.

The problem of determining the \emph{(quotient) complexity of an operation}~\cite{Brz10,Mas70,Yu01,YZS94} on regular languages has received much attention. It is defined as the maximal complexity of the language resulting  from the operation, taken as a function of the complexities of the operands.
Languages that meet the upper bound on the complexity of an operation are \emph{witnesses} for this operation.
Although witnesses for common operations on regular languages are well known, there are occasions when one has to look for new witnesses:
\be
\item
One may be interested in  a \emph{class} of languages that have the same complexity with respect to a given operation. 
For example, let $\Sig=\{a,b\}$ and let $|w|_a$ be the number of times the letter $a$ appears in the word $w\in\Sig^*$.
Then the intersection of the languages $L=\{w\in\Sig^*\colon|w|_a\equiv m-1 \mbox{ mod } m\}$ and $L'=\{w\in\Sig^*\colon|w|_b\equiv n-1 \mbox{ mod } n\}$ has complexity $mn$. 
The languages $K=(b^*a)^{m-1}\Sig^*$ and $K'=(a^*b)^{n-1}\Sig^*$ also meet this bound~\cite{BJL13}; hence  $(L,L')$ and $(K,K')$  are in the same complexity class with respect to intersection.
\item
Whenever one studies complexity within a  \emph{proper subclass} of regular languages, one usually needs to find new witnesses.
For example, in the class of regular right ideals---languages $L\subseteq\Sig^* $ satisfying $L=L\Sig^*$---languages  $K$ and $K'$ are appropriate, but $L$ and $L'$ are not.
\item
When one studies \emph{combined operations} --- operations that involve more than one basic operation,  for example, the intersection of reversed languages --- once again need new witnesses~\cite{LMSY08}. 
\ee
Before stating our result, we provide some additional background information.
The \emph{Myhill congruence} $\lraL$ of $L$ is defined as follows~\cite{Myh57}: For all $x,y \in \Sig^*$, 
\begin{equation*}
x~\lraL~y \mbox{ if and only if } uxv\in L  \Leftrightarrow uyv\in L\mbox { for all } u,v\in\Sig^*.
\end{equation*}
The set $\Sig^+/ \lraL$ of equivalence classes of the relation $\lraL$ is a semigroup with concatenation as the operation; it is called the \emph{syntactic semigroup} of $L$, which we denote by $S_L$. 
It is well known that the syntactic semigroup is isomorphic to the semigroup $S_\cD$ of transformations performed by  non-empty words on the set of states in the minimal DFA $\cD$ recognizing $L$; this semigroup is known as the \emph{transition semigroup} of $\cD$.
If $\cD$ has $n$ states, the cardinality of the transition semigroup is bounded from above by $n^n$, and this bound is reachable.

The \emph{atoms}~\cite{BrTa11,BrTa12} of a regular language are non-empty intersections of left quotients of the language, some or all of which may be complemented. A regular language has at most $2^n$ atoms, and their quotient complexities are known~\cite{BrTa12}.

The \emph{reverse} of a word is defined inductively: the reverse of the empty word $\eps$ is $\eps^R=\eps$, and the reverse of $wa$ with $w\in\Sig^*$ and $a\in \Sig$ is $(wa)^R=aw^R$.
The reverse of a language $L$ is $L^R=\{w^R\colon w\in L\}$.
The maximal complexity of $L^R$ for $L$ with  complexity $n$ is $2^n$, and this bound is reachable~\cite{Mir66}.
\medskip

Whenever new witnesses are used, it is necessary to prove that these witnesses  meet the required bound. It would be very useful to have results stating that \emph{if the languages in question have some  property $P$, then they meet the upper bound for a given operation.} 
Some results of this type are now now briefly discussed.

Let \textbf{MSC} denote the class of languages with \emph{maximal syntactic complexity} (languages with largest syntactic semigroups), let \textbf{STT} denote the class of languages whose minimal DFAs have \emph{set-transitive transition semigroups} (for any two sets of states of the same cardinality there is a transformation that maps one set to the other), let \textbf{MAL} denote the class of \emph{maximally atomic languages} (languages that have all $2^n$ atoms, all of which have maximal possible quotient complexity), let \textbf{MNA} denote the class of languages with the \emph{maximal number {\rm ($2^n$) of atoms}}, and let \textbf{MCR} denote the class of languages with a \emph{maximally complex reverse} (reverse of complexity $2^n$). The following relations hold~\cite{BrDa13a}:
\[ \text{\textbf{MSC} $\subset$ \textbf{STT} = \textbf{MAL} $\subset$ \textbf{MNA} = \textbf{MCR}.} \]
The fact that $\text{\textbf{MSC} $\subset$  \textbf{MCR}} $
is a result of  A. Salomaa, Wood, and Yu~\cite{SWY04},
and the observation that $ \text{ \textbf{MNA} = \textbf{MCR}}$ was made by Brzozowski and Tamm~\cite{BrTa11}.

 Our main theorem is a similar result for binary boolean operations on regular languages. We say that such a boolean operation is \emph{proper} if $\circ$ is not a constant  and not a function of one variable only. 

Let $S_n$ denote the symmetric group of degree $n$. 
A \emph{basis}~\cite{Pic39} of $S_n$
is an ordered pair $(s,t)$ of distinct transformations of $Q_n=\{0,\dots,n-1\}$ that generate $S_n$.
Two bases $(s,t)$ and $(s',t')$ of $S_n$ are \emph{conjugate} if there exists a transformation $r\in S_n$ such that $rsr^{-1}=s'$, and  $rtr^{-1}=t'$.

We are interested in DFAs whose transition semigroups are symmetric groups.
Assume that a DFA $\cD$ (respectively, $\cD'$) has state set $Q_m$ ($Q_n$), and transition semigroup $S_m$ ($S_n$).
Let $L$ ($L'$) be the language accepted by $\cD$ ($\cD'$).
Our main result is the following:
\begin{theorem}
\label{thm:main}
Let $\cD$ and $\cD'$ be binary DFAs with $m$ and $n$ states respectively, where 
$m,n\ge 2$ and $(m,n)\not \in \{(2,2),(3,4),(4,3),(4,4)\}$.
If  the transition semigroups of $\cD$ and $\cD'$ are $S_m$ and $S_n$ respectively, and $\circ$ is a proper binary boolean operation,
then the quotient complexity of $L\circ L'$ is $mn$, unless
$m=n$ and the bases of the transition semigroups of $\cD$ and $\cD'$ are conjugate, in which case the quotient complexity of $L\circ L'$ is $m=n$.
\end{theorem}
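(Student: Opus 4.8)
\medskip
\noindent\textbf{Proof idea.}

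The plan is to analyse the \emph{direct product automaton} $\cD\times\cD'$: its state set is $Q_m\times Q_n$, a letter $x$ acts by $(p,q)\cdot x=(px,qx)$, and its initial state is the pair of initial states. For a proper boolean operation $\circ$, the language $L\circ L'$ is recognized by $\cD\times\cD'$ equipped with the final set $F''=\{(p,q):(p\in F)\circ(q\in F')\}$, where $F,F'$ are the final sets of $\cD,\cD'$. We may assume $\cD$ and $\cD'$ are minimal, and since $S_m$ and $S_n$ act $2$-transitively this forces $F$ and $F'$ to be proper and non-empty; hence the pairs $\big((p\in F),(q\in F')\big)$ realize all of $\{0,1\}^2$, and as $\circ$ is non-constant, $F''$ is proper and non-empty too. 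Thus the quotient complexity of $L\circ L'$ equals the number of states of $\cD\times\cD'$ reachable from the initial state and pairwise distinguishable. The key object is the group $G=\langle(s,s'),(t,t')\rangle\le S_m\times S_n$ generated by the two letter actions, where $(s,t)$ and $(s',t')$ are the bases of $\cD$ and $\cD'$: the reachable states form the $G$-orbit of the initial state, and $G$ is the transition monoid acting on them.

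First I would pin down $G$ via Goursat's lemma. Since $G$ projects onto $S_m$ and onto $S_n$, it has the form $\{(x,y):\theta(xN)=yM\}$ for normal subgroups $N\trianglelefteq S_m$, $M\trianglelefteq S_n$ and an isomorphism $\theta\colon S_m/N\to S_n/M$. Reading off the normal-subgroup lattices of the symmetric groups --- and setting aside the four exceptional pairs, which are exactly where the ``extra'' relation $S_4/V_4\cong S_3$ ($V_4$ the Klein four-group) or the tiny point stabilizers of $S_2,S_3$ cause trouble --- the relevant possibilities are: $G=S_m\times S_n$; the index-two subgroup $\{(x,y):\operatorname{sgn}x=\operatorname{sgn}y\}$; and, only when $m=n$, a ``diagonal'' subgroup $\{(g,\phi(g)):g\in S_m\}$ with $\phi\in\operatorname{Aut}(S_m)$. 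The first two act transitively on $Q_m\times Q_n$, since for $m,n\ge 3$ every coset of a one-point stabilizer in $S_m$ (resp. $S_n$) contains permutations of both signs; so whenever $m\ne n$ all $mn$ states are reachable. The diagonal subgroup arises exactly when $(s',t')=(\phi(s),\phi(t))$, i.e. when the two bases are related by $\phi$ --- which for $m\ne 6$ is the same as being conjugate --- and then the $G$-orbit of the initial state has size $m$ or $m(m-1)$, in either case strictly less than $m^2$, so the complexity is below $mn$; one checks that it equals $m$. (For $m=6$ one must read ``conjugate'' as ``related by an automorphism of $S_6$''.)

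Next I would prove pairwise distinguishability in the two transitive cases. Given reachable states $(p,q)\ne(p',q')$, say $p\ne p'$. Since $\circ$ is not a function of its second argument alone, pick $\beta\in\{0,1\}$ with $0\circ\beta\ne 1\circ\beta$. By $2$-transitivity of $S_m$ and $\emptyset\ne F\ne Q_m$ choose $g\in S_m$ with exactly one of $gp,gp'$ in $F$, and since $\emptyset\ne F'\ne Q_n$ choose $h\in S_n$ placing $hq$ --- and, when $q\ne q'$, also $hq'$ --- on the $\beta$-side of $F'$; the word realizing $(g,h)$ then separates the two states. For the index-two group one must additionally take $g$ and $h$ of equal sign, which is possible precisely when the stabilizers $S_{m-2}$ of two points and $S_{n-1}$ of one point contain odd permutations, i.e. $m\ge 4$ and $n\ge 3$ (or the symmetric condition), with a short split on $|F|,|F'|$ for the residual sub-cases. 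This is exactly where the pairs $(2,2),(3,4),(4,3),(4,4)$ fail: for them $G$ may be an intermediate subgroup coming from $V_4\trianglelefteq S_4$, and/or the point stabilizers of $S_2$ and $S_3$ are too small to equalize signs, and one can build DFAs whose product complexity lies strictly between $m$ and $mn$. For every other pair the two halves combine to show that $\cD\times\cD'$ is \emph{uniformly minimal with every state reachable} --- minimal for every proper non-empty choice of final states --- so the complexity of $L\circ L'$ is $mn$, except when $m=n$ and the bases are conjugate, in which case it is $m$.

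The main obstacle is the interlocking of the Goursat case analysis with this parity bookkeeping at small degree. Because $S_2,S_3,S_4$ carry extra normal subgroups --- chiefly $V_4\trianglelefteq S_4$ with $S_4/V_4\cong S_3$ --- and small point stabilizers, one cannot argue uniformly: one must go through the admissible subgroups of $S_m\times S_n$ one by one, checking in each case whether $G$ still acts transitively on $Q_m\times Q_n$ and whether equal-sign transitive pairs are available. It is this step that simultaneously exhibits the four exceptional pairs and certifies that, for all other $(m,n)$, the dichotomy between complexity $mn$ and complexity $m$ is governed precisely by whether the bases are conjugate.
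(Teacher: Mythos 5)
Your overall two-step plan (analyse the product automaton, pin down the transition group $G=\langle(s,s'),(t,t')\rangle$, then prove reachability and pairwise distinguishability) parallels the paper's, with Goursat's lemma playing the role of the paper's index-of-subgroup lemma and pair-graph machinery. But the step where you dispose of the ``diagonal'' case is wrong, and it is precisely the case on which the paper spends the most effort. The theorem does not exclude $(m,n)=(6,6)$, and its exceptional clause is for bases that are \emph{conjugate}, i.e.\ related by an inner automorphism. When $m=n=6$ and $(s',t')=(\phi(s),\phi(t))$ for an \emph{outer} automorphism $\phi$, your claim that the $G$-orbit of the initial state has size $m$ or $m(m-1)$ fails: for $G=\{(g,\phi(g))\colon g\in S_6\}$, an outer $\phi$ sends transpositions to products of three disjoint transpositions, so the image under $\phi$ of a point stabilizer ($\cong S_5$, order $120$) fixes no point and acts transitively on $Q_6$; the stabilizer of $(0,0)$ in $G$ then has order $20$ and the orbit has size $720/20=36$, so \emph{all} $mn$ states are reachable. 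The paper must then still prove minimality in this situation, which is exactly the content of its $(6,6)$ proposition (using the lemma on order-$120$ subgroups and fixed points under the outer automorphism, plus a case analysis on equivalence classes of sizes $6,9,12,18$), concluding that the complexity is $36=mn$. Your parenthetical ``for $m=6$ one must read conjugate as related by an automorphism of $S_6$'' therefore asserts the opposite of what the theorem states and what the paper proves; the proposal proves a different (and incorrect) statement in this case.

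Two smaller points. First, the distinguishability sketch leaves genuine work in the sign-constrained case ($G$ the equal-sign subgroup) and when a side of $F'$ (or $F$) is a singleton, since you cannot place two distinct points on a one-element side; the paper sidesteps this bookkeeping by showing the pair graph has exactly the three components $C_1,C_2,C_3$ and that each contains a distinguishing pair, where only the $C_1$ component needs the non-trivial ``otherwise the boolean function is constant'' argument. Second, your ``one checks that it equals $m$'' in the conjugate case is not checked and is not immediate: if the conjugating permutation does not fix the initial state, the reachable part of the product has $m(m-1)$ states, so this clause requires a separate argument rather than a remark. The decisive gap, however, is the $(6,6)$ outer-automorphism case above.
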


This theorem is a generalization of some results of Brzozowski and Liu~\cite{Brz12,BrLiu12} which will be stated in the next section.
\medskip

The proof that the quotient complexity of a binary boolean operation on two languages is maximal involves two steps. First, one proves that the direct product of the minimal DFAs of the languages is connected, meaning that all of its states are reachable from the initial state.
Second, one verifies that every two states in the direct product are distinguishable by some word, that is, that they are not equivalent.

The remainder of the paper is structured as follows: Section~\ref{sec:term} defines our terminology and notation. Section~\ref{sec:connect} deals with the conditions under which the direct product of two automata is connected. Section~\ref{sec:uniform} studies uniformly minimal semiautomata, that is, semiautomata which become minimal DFAs if one adds an arbitrary set of final states, other than the empty set and the set of all states. 
Section~\ref{sec:main} contains our main result relating symmetric groups to the complexity of boolean operations, for all except a few cases which are dealt with in Section~\ref{sec:small}.
Section~\ref{sec:conc} concludes the paper.

\section{Preliminaries}
\label{sec:term}

\subsection{Groups}
Many results in this paper rely heavily on the theory of finite groups. Here we provide only some basic definitions, and refer the reader to texts on group theory, for 
example~\cite{Rot65,Suz82}, for additional information.

A \emph{semigroup} $(S,\cdot)$ is a set $S$ with an associative binary operation $\cdot$, which we call \emph{multiplication} and often omit. A \emph{monoid} $(S,\cdot,\one)$ is a semigroup with an \emph{identity}
$\one$, which is an element of $S$ satisfying $\one \cdot s = s \cdot \one=s$ for all $s\in S$.
A \emph{group} is a monoid $(G,\cdot,\one)$, such that every element $g\in G$ has an \emph{inverse} $g^{-1}\in G$ that satisfies $g\cdot g^{-1}=g^{-1}\cdot g=\one$.
The \emph{order} of a group $G$ is the number of elements in $G$.
If $g,h\in G$, then $hgh^{-1}$ is a \emph{conjugate} of $g$ (by $h$).

Let $(G,\cdot,\one_G)$ and $(H,\ast,\one_H)$ be groups.
A \emph{homomorphism} $\phi\colon G\to H$ is a mapping satisfying 
$\phi(g\cdot g')=\phi(g)\ast \phi(g')$.
If $\phi\colon G\to H$ is a homomorphism, the set $\{h\in H\colon h=\phi(g) \text{ for some } g\in G\}$ is the \emph{image} of $\phi$, and the set $\{g\in G\colon \phi(g)=\one_H\}$ is the \emph{kernel} of $\phi$.

A non-empty subset $H$ of a group $G$ is a \emph{subgroup} of $G$, if 
$H$ is a group under the operation of $G$.
If $H$ is a subset of a group $G$, then the smallest subgroup of $G$ containing $H$ is \emph{the subgroup of $G$ generated by $H$}.

For non-empty subsets $H$, $K$ of a group $G$, define
$HK$ to be $HK=\{hk\colon h\in H \text{ and } k\in K\}$.
If $K=\{k\}$ we write $Hk$ for $H\{k\}$.
Let $H$ be a subgroup of a group $G$, and let $g\in G$; then $Hg$ ($gH$) is a 
\emph{right coset} (\emph{left coset}) of $H$ in $G$, and $g$ is a \emph{representative} of $Hg$ and $gH$.
If $H$ is a subgroup of $G$, the number of right cosets of $H$ is the same as the number of left cosets of $H$. The \emph{index} of $H$ in $G$ is the number of right (or left) cosets of $H$ in $G$.
A subgroup $H$ of $G$ is \emph{normal} if $gHg^{-1}$ is a subset of $H$ for all $g\in G$.

\subsection{Transformations}

A {\em transformation} of a set $Q$ is a mapping of $Q$ into itself. 
We consider only transformations of finite non-empty sets and, 
without loss of generality, assume that $Q=Q_n = \{0,1,\dots, n-1\}$. If $t$ is a transformation of $Q_n$
and  $i \in Q_n$, then $t(i)$ is the image of $i$ under $t$.  
An arbitrary transformation is written in the form
\begin{equation*}\label{eq:transmatrix}
t=\left( \begin{array}{ccccc}
0 & 1 &   \dots &  n-2 & n-1 \\
i_0 & i_1 &   \dots &  i_{n-2} & i_{n-1}
\end{array} \right),
\end{equation*}
where $i_k = t(k)$, $0\le k\le n-1$, and $i_k\in Q_n$. 
The {\em composition} of two transformations $t_1$ and $t_2$ of $Q_n$ is a transformation $t_1 \circ t_2$ such that $(t_1 \circ t_2)(i)= t_1( t_2(i))$ for all $i \in Q_n$. We usually omit the composition operator and write $t_1t_2$. 
The set of all transformations of $Q_n$ is a monoid with the identity transformation as the unit and composition as the operation.
A \emph{permutation} of $Q_n$ is a mapping of $Q_n$ \emph{onto} itself. 
In this paper we are concerned only with permutations.
The \emph{identity} transformation is denoted by $\one$.

A permutation $t$ is a \emph{cycle of of length $k$} or a  \emph{$k$-cycle} , where $k \ge 2$, if there exist pairwise different elements $i_1$,~\dots,~$i_k$ such that
$t(i_1)=i_2$, $t(i_2)=i_3$, \dots, $t(i_{k-1})=i_k$, and $t(i_k)=i_1$, and $t$ does not affect any other elements.
A~cycle is denoted by $(i_1,i_2,\dots,i_k)$.
A \emph{transposition} is a 2-cycle.  
Every permutation is a product (composition) of transpositions, and the parity of the number of transpositions in the factorization is an invariant. A permutation is \emph{odd} (\emph{even}) if its factorization has an odd (even) number of factors.

The \emph{symmetric group} $S_n$ of \emph{degree} $n$ is the set of all permutations of $Q_n$, with composition as the group operation, and the identity transformation as $\one$.
The \emph{alternating group} $A_n$ is the set of all even permutations of $S_n$.

Given a subgroup $H$ of $S_n$, we say that $H$ \emph{acts transitively} on $Q_n$ if for each $i,j\in Q_n$ there is some $t\in H$ such that $t(i)=j$.  We say that $H$ \emph{acts doubly transitively} on $Q_n$ if whenever $i,j,k,\ell\in Q_n$ with $i\neq j$ and $k\neq \ell$ there is some $t\in H$ such that $t(i)=k$, $t(j)=\ell$.

\subsection{Semiautomata and Automata}
A \emph{deterministic finite semiautomaton (DFS)} is a quadruple $\cA=(Q, \Sig, \delta, q_0)$, where 
$Q$ is the set of \emph{states}, $\Sig$ is a finite non-empty \emph{alphabet}, $\delta\colon Q\times \Sig\to Q$ is the \emph{transition function}, and $q_0$ is the \emph{initial state}. We extend $\delta$ to $Q \times \Sig^*$ in the usual way.
A state $q$ is  \emph{reachable} from the initial state if  there is a word $w$ such that 
$q=\delta(q_0,w)$. 
A DFS is \emph{connected} if every state $q\in Q$ is reachable.

For a  DFS $\cA=(Q,\Sig, \delta,q_0)$ and a word $w \in \Sigma^*$, the transition function $\delta(\cdot, w)$  is a transformation of $Q$, the transformation \emph{induced by $w$}. 
The set of all transformations induced by non-empty words is the \emph{transition semigroup} $S_\cA$ of $\cA$.
For $w\in\Sig^+$, we denote by $w\colon t$ the transformation $t$ of $Q_n$ induced by $w$.

Given semiautomata $\cA=(Q,\Sig, \delta,q_0)$ and $\cA'=(Q',\Sig, \delta',q'_0)$,
we define their direct product to be the DFS
$\cA\times \cA' =(Q\times Q',\Sig, (\delta,\delta') ,(q_0,q'_0))$.

A~\emph{deterministic finite automaton (DFA)} is a quintuple $\cD=(Q, \Sig, \delta, q_0,F)$, where $(Q,\Sig, \delta,q_0)$ is a DFS and 
$F\subseteq Q$ is the set of \emph{final states}. 
The DFA $\cD$ \emph{accepts} a word $w \in \Sigma^*$ if ${\delta}(q_0,w)\in F$. 
The set of all words accepted by $\cD$ is the \emph{language} $L(\cD)$ of $\cD$. 
The \emph{language accepted from a state} $q$ of a DFA is the language $L_{q}(\cD)$ accepted by the DFA
$(Q, \Sigma, \delta, q,F)$.
Two states of a DFA are \emph{distinguishable} if there exists a word $w$
which is accepted from one of the states and rejected from the other. Otherwise,
the two states are \emph{equivalent}.
A DFA is \emph{minimal} if all of its states are reachable from the initial state and no two states are equivalent. 
Note that if $|Q|\ge 2$ and $\cD$ is minimal, then $\emp\subsetneq F\subsetneq Q$.

\subsection{An Earlier Result}
Let  $\Sig=\{a,b\}$ and
$\cD=(Q_m, \Sig, \delta, 0,\{m-1\})$, where  $a\colon (0,\dots,m-1)$ and $b\colon (0,1)$, and let $L$ be the language of $\cD$.
Similarly,  let
$\cD'=(Q_n, \Sig, \delta', 0,\{n-1\})$, where  $a\colon (0,\dots,n-1)$ and $b\colon (0,1)$, and let $L'$ be the language of $\cD'$.
Also,  let
$\cD''=(Q_n, \Sig, \delta'', 0,\{n-1\})$, where  $b\colon (0,\dots,n-1)$ and $a\colon (0,1)$, and let $L''$ be the language of $\cD''$.

Let $\circ$ denote union, intersection, difference, or symmetric difference.
The following results were proved in~\cite{Brz12,BrLiu12}:
\begin{proposition}
\label{prop:BrzLiu}
For $L$, $L'$ and $L''$ as above and $m,n\ge 3$,  (a) the  complexity of $L\circ L''$ is $mn$, and (b) if $m\neq n$, the  complexity of $L\circ L'$ is $mn$.
 \end{proposition}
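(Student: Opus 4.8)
The plan is to follow the standard two-step recipe for lower-bounding the complexity of a boolean operation, as outlined just before Theorem~\ref{thm:main}: first show the direct product $\cD\times\cD'$ is connected (all $mn$ states reachable), and then show all $mn$ states are pairwise distinguishable. In the special case where the DFAs have the specific cyclic-plus-transposition structure of the earlier result (part~(a)), the witnesses are exactly the languages $L$ and $L''$ with $a\colon(0,\dots,m-1)$, $b\colon(0,1)$ in $\cD$ and $b\colon(0,\dots,n-1)$, $a\colon(0,1)$ in $\cD''$; for part~(b) we use $L$ and $L'$ with $m\ne n$.

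For \textbf{connectedness}, in case~(a) I would argue as follows. Starting from $(0,0)$, applying $a$ cycles the first coordinate through $0,1,\dots,m-1$ while keeping the second coordinate in $\{0,1\}$ (since $a\colon(0,1)$ in $\cD''$); similarly $b$ cycles the second coordinate while keeping the first in $\{0,1\}$. The key observation is that the transition semigroup of $\cD\times\cD''$ acts transitively on $Q_m\times Q_n$: using $a^k$ we reach $(k,0)$ for all $k$, and from $(1,0)$ applying $b^\ell$ we reach $(j,\ell)$ with $j\in\{0,1\}$; combining powers of $a$ and $b$ one shows every pair $(i,j)$ is reachable. This is a finite but routine gcd-type computation and should go through for all $m,n\ge 3$. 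For case~(b), the argument that $\cD\times\cD'$ (both using the same generators $a\colon(0,\dots,\cdot)$, $b\colon(0,1)$) is connected when $m\ne n$ is where the hypothesis $m\ne n$ first enters: one tracks the orbit of $(0,0)$ under $a$, which is the ``diagonal'' $\{(k \bmod m, k \bmod n)\}$, and since $\gcd$ considerations with $b$ let one shift one coordinate by $1$ independently, transitivity follows; were $m=n$ the diagonal would be a proper invariant-looking subset and this is exactly why $m=n$ can fail.

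For \textbf{distinguishability}, I would use that the final state sets are singletons $\{m-1\}$ and $\{n-1\}$ and that the transition semigroups are $S_m$ and $S_n$. Given two distinct states $(i,j)\ne(i',j')$, I need a word $w$ accepted from one and not the other, i.e., a transformation of $Q_m\times Q_n$ (lying in the transition semigroup of the product) that sends exactly one of the two pairs into $\{m-1\}\times\{n-1\}$ (for intersection; the other boolean operations are handled by choosing the pair of target states appropriately using that $\circ$ is proper, so at least one ``acceptance pattern'' distinguishes). If $i\ne i'$, I want a group element moving $i$ to $m-1$ and $i'$ elsewhere while simultaneously a compatible second-coordinate element does the right thing; here I expect to invoke double transitivity of $S_m$ and $S_n$ (noted in the Transformations subsection) together with the connectedness/richness of the product semigroup to realize such a pair of transformations as the action of a single word. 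The cleanest route is to show the transition semigroup of the product is ``set-transitive enough'' on pairs — indeed when $m\ne n$ or the bases are non-conjugate, the product semigroup is all of $S_m\times S_n$ or at least acts transitively on $Q_m\times Q_n$ with enough flexibility in each coordinate.

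\textbf{The main obstacle} I anticipate is the distinguishability step, specifically controlling the transition semigroup of the direct product. When $m=n$ and the bases are conjugate, the product semigroup collapses to (a copy of) the diagonal subgroup $\{(g, rgr^{-1}) : g\in S_n\}$, which acts on $Q_n\times Q_n$ with the diagonal-like orbit having size only $n$ — this is precisely the exceptional case and explains why complexity drops to $n$. So the real content is a group-theoretic dichotomy: I must show that the subgroup of $S_m\times S_n$ generated by $\{(a_m,a_n),(b_m,b_n)\}$ (the diagonal pair of generators) either is large enough that its orbit on $Q_m\times Q_n$ is everything and it separates all pairs of states — which should follow from Goursat's lemma combined with the simplicity of $A_n$ for $n\ge 5$ and the non-conjugacy hypothesis — or else $m=n$, the generators are conjugate, and the orbit is the small diagonal. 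Goursat's lemma is the natural tool: a subgroup of $G\times H$ projecting onto both factors corresponds to an isomorphism between quotients of $G$ and $H$; using the normal-subgroup structure of $S_m$ (only $\one$, $A_m$, $S_m$ for $m\ge 5$) this forces the subgroup to be either all of $S_m\times S_n$, or $m=n$ and it is a ``twisted diagonal'' $\{(g,\theta(g))\}$ for an automorphism $\theta$ of $S_n$ — and since $\mathrm{Aut}(S_n)$ is inner for $n\ne 6$, $\theta$ is conjugation, giving exactly the conjugate-bases case. The finitely many small values $m,n\in\{2,3,4\}$ (and $n=6$, where $\mathrm{Aut}(S_6)$ has the exceptional outer automorphism, though one can check the generators still behave) are where these clean structural facts break down, which is why the theorem explicitly excludes $(m,n)\in\{(2,2),(3,4),(4,3),(4,4)\}$ and why Section~\ref{sec:small} handles residual cases separately. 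Assembling this dichotomy, feeding the ``large subgroup'' case into the connectedness and distinguishability arguments above, and checking the boolean operation $\circ$ being proper suffices to pick a separating acceptance pattern, completes the proof.
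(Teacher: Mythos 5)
The paper does not actually prove Proposition~\ref{prop:BrzLiu}: it imports it from \cite{Brz12,BrLiu12}, where it is established by direct computation with these specific witnesses (explicit words reaching all $mn$ states of the product, and explicit distinguishing words exploiting the singleton final-state sets $\{m-1\}$ and $\{n-1\}$). Your plan instead runs the general machinery that the present paper develops for Theorem~\ref{thm:main}: control the transition group of $\cD\times\cD'$ via Goursat's lemma, simplicity of $A_n$, and the structure of $\mathrm{Aut}(S_n)$, then deduce connectedness and distinguishability. That is a legitimate alternative route --- it is essentially Theorem~\ref{thm: reach}, Lemma~\ref{lem:main} and Corollary~\ref{cor:main} --- and where it applies it proves more than the Proposition asks.

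The genuine gap is at the small parameters, and you have the logic backwards there. The dichotomy you invoke requires $n\ge 5$ (simplicity of $A_n$) and $n\ne 6$, whereas the Proposition asserts the bound for \emph{all} $m,n\ge 3$, in particular for $(m,n)\in\{(3,4),(4,3),(4,4)\}$. You dismiss these by pointing to the exclusion list of Theorem~\ref{thm:main}, but the Proposition claims the bound precisely at those sizes for these particular witnesses, and Examples~\ref{ex:34} and~\ref{ex:44} show that the general statement really does fail there for some non-conjugate bases and final-state choices; so those cases cannot be deferred to the general theorem and must be checked directly for $L$, $L'$, $L''$ as defined (which is what \cite{Brz12,BrLiu12} do). Two further steps you label routine are not supplied. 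First, the reachability computation for $\cD\times\cD''$: in $\cD''$ the letter $a$ acts as the transposition $(0,1)$, so $a^k$ sends $(0,0)$ to $(k\bmod m,\,k\bmod 2)$, and the parity interaction with $m$ must actually be worked through. Second, for part (b) with $m\equiv n\pmod 2$ the product group is not all of $S_m\times S_n$ but only the index-two ``equal parity'' subgroup; distinguishability still goes through (it contains $A_m\times A_n$, which is doubly transitive on each factor for degree at least $4$), but this needs to be said rather than subsumed under ``set-transitive enough.'' Until these points are filled in, your text is a correct outline of the paper's general strategy rather than a proof of the Proposition.
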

 
Our main theorem is a generalization of this result.
\section{Connectedness}
\label{sec:connect}

From now on we are interested in semiautomata $\cA$ and $\cA'$ whose transition semigroups are symmetric groups generated by two-element bases.
We assume that permutations $s$ and $s'$ are induced by  $a$ in $\cA$ and $\cA'$, and permutations $t$ and $t'$, by  $b$, that is,
$a\colon s$, $b\colon t$ in $\cA$, and $a\colon s'$, $b\colon t'$ in $\cA'$.

\begin{example}
Let $\Sig=\{a,b\}$, $\cA=(Q_3,\Sig,\delta,0)$, and $\cA'=(Q_3,\Sig,\delta',0)$, where
$a\colon s=(0,1,2)$, $b\colon t=(0,1)$ in $\cA$, 
and $a\colon s'=(0,1,2)$, $b\colon t'=(1,2)$ 
in $\cA'$.
Then $(s,t)$ and $(s',t')$ are conjugate, since $rsr^{-1}=s'$ and
$rtr^{-1}=t'$ for $r=(0,1,2)$.
On the other hand, if  $s''=(0,1)$ and $t''=(0,1,2)$,  then $(s,t)$ and $(s'',t'')$ are not conjugate.

The transition semigroups of $\cA$, $\cA'$ and $\cA''$ all have 6 elements.
Those of $\cA$ and $\cA'$, when viewed as semigroups generated by $a$ and $b$, are identical, but those of $\cA$ and $\cA''$ are not: for example,
$a^3=\one$ in $\cS_\cA$, but $a^2=\one$ in $\cS_{\cA''}$.
\qedb
\end{example}

\begin{theorem}
\label{thm: reach}
Let  $\Sig=\{a,b\}$, let $\cA=(Q_m,\Sig, \delta,0)$ and $\cA'=(Q_n,\Sig, \delta',0)$ be 
semiautomata with transition semigroups  that are  symmetric groups  of degrees $m$ and $n$ respectively, and let the corresponding bases be $B$ and $B'$.  For $m,n\ge 1$, the direct product $\cA\times \cA'$ is connected if and only if either (1) $m\neq n$ or (2) $m=n$ and $B$ and $B'$ are not conjugate. 
\end{theorem}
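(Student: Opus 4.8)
The plan is to translate connectedness of $\cA\times\cA'$ into a transitivity statement about a group action. Writing $a\colon s$, $b\colon t$ in $\cA$ and $a\colon s'$, $b\colon t'$ in $\cA'$, the transformation induced on $\cA\times\cA'$ by a word $w$ is the pair of transformations induced by $w$ on $\cA$ and on $\cA'$; hence the set of all these pairs (including the identity, induced by $\eps$) is precisely the group $G=\langle(s,s'),(t,t')\rangle$, regarded as a subgroup of $S_m\times S_n$ acting on $Q_m\times Q_n$ coordinatewise. A state $(i,j)$ is thus reachable from $(0,0)$ exactly when it lies in the $G$-orbit of $(0,0)$, so $\cA\times\cA'$ is connected if and only if $G$ acts transitively on $Q_m\times Q_n$. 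The case $m=1$ or $n=1$ is trivial (and $S_1$ admits no basis), so I would assume $m,n\ge2$ and prove: $G$ is transitive iff $m\ne n$, or $m=n$ and $B,B'$ are not conjugate.

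First I would use that $(s,t)$ generates $S_m$ and $(s',t')$ generates $S_n$, so both coordinate projections $\pi_1\colon G\to S_m$ and $\pi_2\colon G\to S_n$ are onto; one then checks that $K_1=\{x\in S_m\colon(x,\one)\in G\}$ and $K_2=\{y\in S_n\colon(\one,y)\in G\}$ are normal subgroups of $S_m$ and $S_n$ (they are the kernels of $\pi_2|_G$, $\pi_1|_G$, pushed down via the surjective other projection). The key reduction is: \emph{if $K_1\ne\{\one\}$ then $G$ is transitive} (and symmetrically for $K_2$). Indeed, by the classification of normal subgroups of a symmetric group --- $\{\one\}$, $A_k$, $S_k$, and additionally the Klein four-group when $k=4$ --- every non-trivial normal subgroup acts transitively on its point set; hence $K_1\times\{\one\}\le G$ already carries $(0,0)$ onto all of $Q_m\times\{0\}$, and since for each $g\in G$ we have $g(Q_m\times\{0\})=Q_m\times\{\pi_2(g)(0)\}$ with $\pi_2(g)(0)$ ranging over all of $Q_n$, the orbit of $(0,0)$ is all of $Q_m\times Q_n$. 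Consequently $G$ can fail to be transitive only if $K_1=K_2=\{\one\}$; but then $\pi_1$ and $\pi_2$ are isomorphisms, so $|S_m|=|S_n|$ forces $m=n$, and $G=\{(x,\phi(x))\colon x\in S_n\}$ for an automorphism $\phi$ of $S_n$ with $\phi(s)=s'$ and $\phi(t)=t'$. In particular, when $m\ne n$ transitivity always holds, which is one direction of the theorem.

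It then remains to analyse the graph of an automorphism. If $\phi$ is inner, say $\phi(x)=rxr^{-1}$, then $rsr^{-1}=s'$ and $rtr^{-1}=t'$, which is exactly the assertion that $B$ and $B'$ are conjugate; conversely, if $B,B'$ are conjugate with conjugator $r$, then $G=\{(x,rxr^{-1})\colon x\in S_n\}$. In this inner case the bijection $(i,j)\mapsto(i,r^{-1}(j))$ conjugates the $G$-action on $Q_n\times Q_n$ to the diagonal action of $S_n$, whose orbits are $\{(i,i)\colon i\in Q_n\}$ and $\{(i,j)\colon i\ne j\}$; hence $G$ is not transitive, i.e.\ conjugate bases yield a disconnected product, as claimed. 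Since every automorphism of $S_n$ is inner except when $n=6$, the only remaining configuration is $n=6$ with $\phi$ outer and $B,B'$ non-conjugate, and there one must still verify transitivity.

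I expect this $n=6$ sub-case to be the main obstacle; it rests on special structure of $S_6$. The outer automorphism of $S_6$ (and its inverse, which lies in the same non-trivial coset) sends the point stabilizer $\mathrm{Stab}_{S_6}(0)\cong S_5$ to a \emph{transitive} subgroup again isomorphic to $S_5$. Hence $\mathrm{Stab}_G((0,0))$ is isomorphic to $\mathrm{Stab}_{S_6}(0)\cap\phi^{-1}(\mathrm{Stab}_{S_6}(0))$, which is the stabilizer of $0$ inside the transitive subgroup $\phi^{-1}(\mathrm{Stab}_{S_6}(0))\cong S_5$; by the orbit--stabilizer theorem it has order $120/6=20$, so the $G$-orbit of $(0,0)$ has size $720/20=36=|Q_6\times Q_6|$ and $G$ is transitive --- consistent with $B,B'$ being non-conjugate. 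Assembling these pieces yields the stated equivalence.
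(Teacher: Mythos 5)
Your proof is correct, but it follows a genuinely different route from the paper's. The paper works with $H_0=\{h\colon \pi_1(h)(0)=0\}$, bounds the order of $\pi_2(H_0)$ by index counting ($|\pi_2(H_0)|\ge n!/m\ge (n-1)!$), uses the fact that an intransitive subgroup of $S_n$ embeds in some $S_i\times S_{n-i}$ to conclude that $\pi_2(H_0)$ is transitive unless $m=n$ and it is a point stabilizer, and in the disconnected case it \emph{constructs} the conjugating permutation $u$ directly from the orbit of $(0,0)$ --- so Theorem~\ref{thm: reach} in the paper never appeals to automorphism theory; its converse (conjugate bases imply disconnectedness) is a two-case computation on $r^{-1}(0)$. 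You instead make a Goursat-type reduction: the kernels $K_1,K_2$ of the restricted projections are normal in $S_m$, $S_n$, every non-trivial normal subgroup of a symmetric group (including $V_4$ inside $S_4$) is transitive, so non-transitivity forces $K_1=K_2=\{\one\}$, $m=n$, and $G$ equal to the graph of an automorphism $\phi$ with $\phi(s)=s'$, $\phi(t)=t'$; then you invoke that all automorphisms of $S_n$ are inner for $n\neq 6$, treat the inner case by conjugating the action onto the diagonal $S_n$-action (a cleaner converse than the paper's case analysis, and it identifies the unreachable set $\{(i,r(i))\}$ in one stroke), and dispose of the outer $n=6$ case by the standard fact that an outer automorphism of $S_6$ carries point stabilizers to transitive copies of $S_5$, so orbit--stabilizer gives $|G\cdot(0,0)|=720/20=36$. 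The trade-off: your argument leans on heavier standard group theory --- the inner-automorphism theorem and the $S_6$ facts are asserted rather than proved (they do follow from Miller's theorem that outer automorphisms send transpositions to triple transpositions, which is essentially what the paper proves for itself later in Lemma~\ref{lem:s6} for the $(6,6)$ pair-graph case) --- while the paper's argument is more self-contained and produces the conjugator explicitly; conceptually, though, your decomposition makes transparent exactly where the $S_6$ anomaly could have interfered and why it does not.
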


\begin{proof}
Without loss of generality, assume that $m\le n$.
Let $H$ denote the transition semigroup of $\cA\times \cA'$; then $H$ is a subgroup of $S_m\times S_n$. Define homomorphisms $\pi_1\colon H \to S_m$ and $\pi_2\colon H \to S_n$ by $\pi_1((s,t)) = s$ and $\pi_2((s,t)) = t$. Observe that $\pi_1$ and $\pi_2$ are surjective, since the transition semigroups of $\cA$ and $\cA'$ are $S_m$ and $S_n$ respectively.  We let $H_0$ denote the subgroup of $H$ consisting of all elements that map the set $\{0\}\times Q_n$ to itself.  Then $H_0$ has index $m$ in $H$ and thus $\pi_2(H_0)$ has index at most $m$ in $\pi_2(H)=S_n$.
Thus the order of $\pi_2(H_0)$ is at least $n!/m\ge (n-1)!$.  

Since a subgroup of $S_n$ that does not act transitively on $Q_n$ is necessarily isomorphic to a subgroup of $S_i\times S_{n-i}$ for some $i\in \{1,\ldots ,n-1\}$~\cite[Section 2.5.1]{Wil09},  a subgroup of $S_n$ whose order is strictly greater than $(n-1)!$ acts transitively on $Q_n$.  Moreover, a subgroup of order $(n-1)!$ that does not act transitively on $Q_n$ is isomorphic to $S_1\times S_{n-1}$; that is, it is the stabilizer of a point.  Thus $\pi_2(H_0)$ fails to act transitively on $Q_n$ if and only if $m=n$ and $\pi_2(H_0)$ is the stabilizer of a point. 

Suppose that $m< n$ or $m=n$ and $\pi_2(H_0)$ is not the stabilizer of a point,  which is equivalent to assuming that $\pi_2(H_0)$ acts transitively on $Q_n$.  We claim that the direct product $\cA\times \cA'$ is connected.  To see this, notice that given $(i,j)$ and $(i',j')$ in $Q_m\times Q_n$, we can find $t$ (respectively $t'$) in $H$ that  sends $(i,j)$ to $(0,k)$ (respectively $(i',j')$ to $(0,k')$) for some $k$ (respectively $k'$) in $Q_n$, since $\pi_1(H)=S_m$ acts transitively on $Q_m$.  
Since we have assumed that $\pi_2(H_0)$ acts transitively on $Q_n$, we can find $t'' \in H$ such that $\pi_2(t'')\in \pi_2(H_0)$  sends $(0,k)$ to $(0,k')$. Hence 
$(t')^{-1}t''t$ sends $(i,j)$ to $(i',j')$, and so $\cA\times \cA'$ is connected.

Suppose next that $m=n$ and $\pi_2(H_0)$ is the stabilizer of a point.   By relabelling if necessary, we may assume that $\pi_2(H_0)$ stabilizes $0$.   Then $H$ cannot send $(0,0)$ to $(0,i)$ for $i\neq 0$ and so $\cA\times \cA'$ is not connected.  We claim that the bases $B$ and $B'$ are conjugate.  

To prove this claim, note that $H$ has the property that if $(s,t)\in H\subseteq S_n\times S_n$ and $s(0)=0$, then $t(0)=0$.  
We claim there is a permutation $u \in S_n$ with $u(0)=0$ such that if $(s,t)\in H$ sends $(0,0)$ to $(j,k)$, then $k=u(j)$. First suppose that $k_1,k_2\in Q_n$ have the property that there is some $j\in Q_n$ such that $(j,k_1)$ and $(j,k_2)$ are in the orbit of $(0,0)$ under the action of $H$.  Then we can pick $h$ in $H$ such that $\pi_1(h)(j)=0$.  Then $(0,\pi_2(h)(k_1))$ and $(0,\pi_2(h)(k_2))$ are both in the orbit of $(0,0)$, which means that $\pi_2(h)(k_1)=\pi_2(h)(k_2) =0$, giving $k_1=k_2$.  It follows that there is a map $u\colon Q_n\to Q_n$ with $u(0)=0$ such that,
if $(s,t)\in H$ sends $(0,0)$ to $(j,k)$, then $k=u(j)$.  Since $\pi_2(H)$ acts transitively on $Q_n$, the map $u$ must be surjective and hence is a permutation, as claimed.

Let $s_1,s_2\in S_n$ denote the elements in the transition semigroup corresponding to $a\in \Sigma$, and let $t_1,t_2\in S_n$  correspond to $b\in \Sigma$.   
Let $H'$ be the group generated by $(s_1, u^{-1} t_1 u), (s_2,u^{-1} t_2  u)$.
Then $H'$ is conjugate to $H$ (we conjugate $H$ by $ ({\bf 1},u)$ to obtain $H'$); furthermore, $H'$  has the property that if $(s,t)\in H'$ sends $(0,0)$ to $(i,j)$, then $i=j$.  Thus $H'$ acts transitively on the diagonal of $Q_n\times Q_n$; if $(s,t) \in H'$ then $s(i)=t(i)$ for all $i\in Q_n$, which gives that $s=t$. Hence, if $(s,t')\in H$, then  $u^{-1}t' u=s$ and so the bases $B$ and $B'$ are conjugate.  
Thus if   
$\cA\times \cA'$ is not connected, then $m=n$ and the bases $B$ and $B'$ are conjugate.

 Now we show the converse: If $m=n$ and the bases $B=(s,t)$ and $B'=(s',t')$ are conjugate, then $\cA\times \cA'$ is not connected.
If  $rsr^{-1}=s'$, and  $rtr^{-1}=t'$, 
let $\psi_r\colon \{s,t\}^+\to \{s',t'\}^+$ be the mapping that assigns to  $x\in \{s,t\}^+$ the element 
$rxr^{-1}\in \{s',t'\}^+$.
For any $x,y\in \{s,t\}^+$, if $xy=z$, then 
$\psi_r(x)\psi_r(y)=(rxr^{-1})(ryr^{-1})=r(xy)r^{-1}=\psi_r(z)$.
Hence  the transition semigroups of $\cA$ and $\cA'$ are isomorphic.

The direct product $\cA\times \cA'$ is defined by $(Q_n\times Q_n,\{a,b\},
(\delta,\delta'),(0,0))$, where  $(\delta,\delta')((i,j),a) = (s(i),rsr^{-1}(j))$ and
$(\delta,\delta')((i,j),b) = (t(i),rtr^{-1}(j))$ for any $i,j\in Q_n$.  If $\cA\times \cA'$ is connected, then for all $(i,j)\in
Q_n\times Q_n$ there must exist a word $w\in \Sigma^{+}$ such that
$(\delta,\delta')((0,0),w)=(i,j)$ or, equivalently, there exists a permutation
$p$ such that $p(0)=i$ and $rpr^{-1}(0)=j$. 
There are now two cases:
\be 
\item
If $r^{-1}(0) \not= 0$, we prove that state $(i,r(i))$ is unreachable for all $i \in  Q_n$.
If $(i,r(i))$ is reachable, then there exists a permutation $p$ such that
$p(0)=i$ and $rpr^{-1}(0)=r(i)$. But then $r^{-1}rpr^{-1}(0)=pr^{-1}(0)=i=p(0)$, and so
$p^{-1}pr^{-1}(0)=r^{-1}(0)=0$, which is a contradiction.
\item
If $r^{-1}(0)= 0$, we prove that state $(i,i)$ is unreachable for some $i \in  Q_n$.
Since $r$ cannot be the identity, there must exist an $i$ such 
that $r(i)\not=i$. Suppose $(i,i)$ is reachable for that $i$.
Then there exists a permutation
$p$ such that $p(0)=i$ and $rpr^{-1}(0)=i$. 
Thus $i=rpr^{-1}(0)=rp(0)=p(0)$ and 
$r(i)=i$, which is a contradiction.
\ee

In either case   $\cA\times \cA'$ is not connected. 
\qed
\end{proof}

\begin{remark}
\label{rem:sc}
If $\cA\times \cA'$ is connected, then it is strongly connected, since the transition semigroup of $\cA\times \cA'$ is a group.
\end{remark}

\section{Uniformly Minimal Semiautomata}
\label{sec:uniform}
Semiautomata that result in minimal DFAs under any non-trivial assignment of final states were studied by Restivo and Vaglica~\cite{ReVa12}.
We modify their definitions slightly to suit our purposes.
A strongly connected DFS $\cA=(Q,\Sig,\delta,q_0)$ with $|Q|\ge 2$ is \emph{uniformly minimal}
if the DFA $\cD=(Q,\Sig,\delta,q_0,F)$ is minimal for each set $F$ of final states, where $\emp\subsetneq F\subsetneq Q$.

Given a DFS $\cA=(Q,\Sig,\delta,q_0)$, we define the \emph{pair graph}
of $\cA$ to be the directed graph $G_\cA=(V_\cA,E_\cA)$, where the set $V_\cA$ of vertices  is the set of all two-element subsets $\{p,q\}$ of $Q$, and the set $E_\cA$ of edges consists of unordered pairs $(\{p,q\},\{p',q'\})$ such that 
$\{\delta(p,a),\delta(q,a)\}=\{p',q'\}$.
The following result was proved in~\cite{ReVa12}:

\begin{proposition}[Restivo and Vaglica]
\label{prop:Restivo}
Let $\cA=(Q,\Sig,\delta,q_0)$ be a strongly connected DFS with at least two states. If the pair graph
$(V_\cD,E_\cD)$ is strongly connected, then $\cA$ is uniformly minimal.
\end{proposition}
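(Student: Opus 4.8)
The plan is to show that if some set $F$ with $\emp\subsetneq F\subsetneq Q$ makes $\cD=(Q,\Sig,\delta,q_0,F)$ non-minimal, then the pair graph $G_\cA$ cannot be strongly connected. Since $\cA$ is strongly connected, every state is reachable, so non-minimality must come from a pair of equivalent states. Thus the starting point is: suppose $p\neq q$ are equivalent in $\cD$; I want to derive that $\{p,q\}$ lies in a proper "trap" region of the pair graph, contradicting strong connectedness.

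The key observation is that equivalence of states is preserved by the transition function: if $p$ and $q$ are equivalent in $\cD$, then for every $a\in\Sig$, the states $\delta(p,a)$ and $\delta(q,a)$ are also equivalent (and either equal or again a genuinely distinct equivalent pair). So consider the set $V_{\text{eq}}$ of all two-element subsets $\{p',q'\}$ of $Q$ consisting of equivalent states. From any vertex $\{p',q'\}\in V_{\text{eq}}$, every out-edge in $G_\cA$ goes either to another vertex in $V_{\text{eq}}$ (when $\delta(p',a)\neq\delta(q',a)$) or "disappears" (when the two images coincide, in which case there is no vertex $\{p',q'\}$ to map to, i.e.\ that letter contributes no edge). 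Hence $V_{\text{eq}}$, if non-empty, is closed under the edge relation of $G_\cA$: no edge leaves it. The first step, then, is to make this closure statement precise, being careful about the convention that edges are only defined when $\{\delta(p,a),\delta(q,a)\}$ is genuinely a two-element set.

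Next I would invoke strong connectedness of $G_\cA$ together with the standing hypothesis $|Q|\ge 2$, so that $G_\cA$ has at least one vertex. If $V_{\text{eq}}$ were non-empty and a proper subset of $V_\cA$, then we would have a proper non-empty vertex set with no outgoing edges, which contradicts strong connectedness (one cannot return to a vertex outside $V_{\text{eq}}$ from a vertex inside it). The only remaining possibility is $V_{\text{eq}}=V_\cA$, meaning \emph{every} pair of states is equivalent — but then $\cD$ accepts either $\emp$ or $\Sig^*$, which is incompatible with $\emp\subsetneq F\subsetneq Q$ (pick $p\in F$, $q\notin F$; the empty word distinguishes them). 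Hence $V_{\text{eq}}=\emp$, i.e.\ no two states are equivalent, and since all states are reachable, $\cD$ is minimal. As $F$ was arbitrary in the allowed range, $\cA$ is uniformly minimal.

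The main obstacle is purely bookkeeping rather than conceptual: handling the case where a letter $a$ collapses an equivalent pair (i.e.\ $\delta(p,a)=\delta(q,a)$) so that no corresponding edge exists in $G_\cA$, and making sure this "missing edge" does not break the argument that $V_{\text{eq}}$ is a sink component. Since strong connectedness of $G_\cA$ only constrains the edges that \emph{do} exist, a closed-under-existing-edges set with no out-edges still contradicts it, so the argument goes through; one just has to state the closure property with the right quantifiers. A minor subtlety is the degenerate case $|Q|=2$, where $V_\cA$ is a single vertex and "strongly connected" is vacuous — but then $\emp\subsetneq F\subsetneq Q$ forces $|F|=1$, and the two states are trivially distinguished by $\eps$, so minimality holds directly.
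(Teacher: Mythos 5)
Your proof is correct. Note that the paper does not actually prove Proposition~\ref{prop:Restivo}; it only cites it from Restivo and Vaglica~\cite{ReVa12}. Your argument --- the set of pairs of equivalent states is closed under the (existing) edges of the pair graph, is not all of $V_\cA$ because a final and a non-final state are distinguished by $\eps$, and hence must be empty by strong connectedness --- is exactly the standard argument behind their result, with the collapsing-letter case ($\delta(p,a)=\delta(q,a)$) and the one-vertex pair graph handled properly.
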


We prove a similar result for semiautomata   with transition semigroups  that are the symmetric groups. 

\begin{proposition}
\label{prop:DFS}
Suppose that $\cA=(Q_n,\Sig,\delta,q_0)$ is a DFS and the transition semigroup $S_\cA$ of $\cA$ is the symmetric group $S_n$. Then $\cA$ is strongly connected and uniformly minimal.
\end{proposition}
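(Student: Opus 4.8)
The plan is to exploit two facts about the symmetric group: since $S_\cA = S_n$, \emph{every} permutation of $Q_n$ is the transformation induced by some non-empty word over $\Sig$, and $S_n$ acts doubly transitively on $Q_n$ whenever $n\ge 2$. I would prove the two assertions separately, strong connectedness first and then uniform minimality, the latter reducing to distinguishability of states. Throughout we may assume $n\ge 2$, since for $n=1$ the notion of uniform minimality is not defined (it presupposes $|Q|\ge 2$) and there is nothing to prove.

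For strong connectedness, given any two states $p,q\in Q_n$, transitivity of $S_n$ on $Q_n$ provides a permutation $\pi\in S_n$ with $\pi(p)=q$; as $S_\cA=S_n$, this $\pi$ is induced by some word $w\in\Sig^+$, so $\delta(p,w)=q$. Hence every state is reachable from every other state, and in particular from $q_0$, so $\cA$ is strongly connected. (This also matches Remark~\ref{rem:sc}, the transition semigroup being a group.)

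For uniform minimality, fix an arbitrary $F$ with $\emp\subsetneq F\subsetneq Q_n$ and consider $\cD=(Q_n,\Sig,\delta,q_0,F)$. All states of $\cD$ are reachable by the previous step, so it suffices to show no two distinct states are equivalent. Let $p\neq q$ be states of $\cD$; choose $f\in F$ and $g\in Q_n\setminus F$, so that $f\neq g$. Since $n\ge 2$, $S_n$ acts doubly transitively on $Q_n$, so there is a permutation $\pi\in S_n=S_\cA$ with $\pi(p)=f$ and $\pi(q)=g$; pick $w\in\Sig^+$ inducing $\pi$. Then $\delta(p,w)=f\in F$ while $\delta(q,w)=g\notin F$, so $w$ distinguishes $p$ from $q$. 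Therefore $\cD$ is minimal, and since $F$ was arbitrary, $\cA$ is uniformly minimal.

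There is no real obstacle here: the statement is an immediate consequence of double transitivity of $S_n$ together with the hypothesis that the full symmetric group is realized as the transition semigroup. The only minor care needed is the boundary case $n=2$ (where double transitivity still holds: the unique two-element subset $\{0,1\}$ is realized in both orders by $\one$ and by the transposition), and the fact that $|Q|\ge 2$ is already built into the definition of uniform minimality. An alternative, slightly longer route would be to check that the pair graph $G_\cA$ is strongly connected --- which follows because $S_n$ permutes the two-element subsets of $Q_n$ transitively, and each such permutation is word-induced --- and then invoke Proposition~\ref{prop:Restivo}; but the direct argument above is shorter and self-contained.
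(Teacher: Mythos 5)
Your proof is correct. It takes a somewhat different route from the paper's: the paper verifies that the pair graph $G_\cA$ is strongly connected (using the same double-transitivity observation you make) and then invokes Proposition~\ref{prop:Restivo} of Restivo and Vaglica to conclude uniform minimality, whereas you argue distinguishability of an arbitrary pair of states directly --- for $p\neq q$ and $\emp\subsetneq F\subsetneq Q_n$ you pick $f\in F$, $g\notin F$ and use a word inducing a permutation sending $p\mapsto f$, $q\mapsto g$. Your version is self-contained and avoids the external citation, which is a genuine (if modest) simplification; the paper's version has the advantage of exercising the pair-graph formalism that is the main tool in the later sections (Lemma~\ref{lem:dist}, Lemma~\ref{lem:main}), so the two proofs of this proposition serve slightly different expository purposes. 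You correctly flag the only delicate points: the case $n=2$ (where double transitivity in the paper's sense still holds) and the fact that $|Q|\ge 2$ is built into the definition of uniform minimality. Your alternative sketch at the end is essentially the paper's actual proof.
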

\begin{proof}
If  $S_\cA=S_n$, then $S_\cA$ contains all permutations of $Q_n$, in particular, the cycle $(0,\dots,n-1)$; hence $\cA$ is strongly connected. For any $(i,j), (k,\ell)\in Q_n\times Q_n$ with $i\neq j$, $k\neq \ell$, and $\{i,j\}\neq \{k,\ell\}$,  any permutation that maps $i$ to $k$ and $j$ to $\ell$ connects $\{i,j\}$ to $\{k,\ell\}$ in the pair graph of $\cA$. Hence the pair graph is strongly connected, and $\cA$ is uniformly minimal by Proposition~\ref{prop:Restivo}.
\qed
\end{proof}

Let the truth values of propositions be 1 (true) and 0 (false). Let $\circ\colon \{0,1\}\times \{0,1\} \to \{0,1\}$ be a binary boolean function.
Extend $\circ$ to a function
$\circ\colon 2^{\Sig^*}\times 2^{\Sig^*}\to 2^{\Sig^*}$:
If $w\in \Sig^*$ and $L,L'\subseteq \Sig^*$, 
then $$w\in (L\circ L') \Leftrightarrow (w\in L) \circ (w\in L').$$
Also, extend $\circ$ to a function
$\circ\colon 2^{Q_m}\times 2^{Q_n}\to 2^{Q_m\times Q_n}$:
If $q\in Q_m$, $q'\in Q_n$, $F\subseteq Q_m$, and $F'\subseteq Q_n$, 
then $$(q,q')\in (F\circ F') \Leftrightarrow (q\in F) \circ (q'\in F').$$

Suppose that $\cA=(Q,\Sig, \delta,0)$ and $\cA'=(Q',\Sig, \delta',0)$ with $|Q|=m$ and $|Q'|=n$ are uniformly minimal DFSs, and 
$\circ$ is any proper boolean function.
The pair  $(\cA, \cA')$  is \emph{uniformly minimal for $\circ$}
if the direct product $\cP=(Q\times Q', \Sig, (\delta,\delta'), (0,0), F\circ F')$
is minimal for all valid assignments of $F$ and $F'$ of sets of final states to $\cA$ and $\cA'$, that is, sets $F$ and $F'$ such that $\emp\subsetneq F\subsetneq Q$ and $\emp\subsetneq F'\subsetneq Q'$.

If $n=1$,  then $\cA\times\cA'$ is isomorphic to $\cA$ and  no
boolean function $\circ$ is proper.
Hence this case, and also the case with $m=1$, is of no interest.
Henceforth we assume that  $m,n\ge 2$.

We now consider  pair graphs of DFSs with symmetric groups as their transition semigroups.  

\begin{example}
\label{ex:22}
Suppose now that $m=n=2$, and $\cA$ and $\cA'$ both have $S_2$ as their transition semigroup.
There are two permutations in $S_2$: $(0,1)$ and $\one$, and 
there are  three bases:  $B_1=(a\colon (0,1), b\colon (0,1))$,
$B_2=(a\colon (0,1), b\colon \one)$, and
$B_3=(a\colon \one, b\colon (0,1))$.
Note that no two of these bases are conjugate.

For each basis, there are two possible final states, 0 or 1, and  hence  two DFAs; thus there are six different DFAs.
There are then twelve direct products $\cD^i_j\times \cD^k_\ell$ with non-conjugate bases, where
$\cD^i_j$ ($\cD^k_\ell$) uses basis $B_i$ ($B_k$) and has $j$ ($\ell$) as final state, for $i,k=1,2,3$ and $j,\ell=1,2$.

For each pair of DFAs accepting languages $L$ and $L'$ respectively, we tested the complexity of five boolean functions: 
$L\cup L'$,  $L\cap L'$, $L\oplus L'$ , $L\setminus L'$ and $L'\setminus L$. Note that the complexity of each remaining proper boolean function is the same as that of one of these five functions.
For all twelve direct products of DFAs with non-conjugate bases, all proper boolean functions reach the maximal complexity 4, except for the functions $L\oplus L'$ and $\ol{L\oplus L'}$, which fail in all twelve cases.
Thus any two DFAs $\cD=(Q_2,\Sig,\delta_i,0,F)$ and $\cD'=(Q_2,\Sig,\delta_k,0,F')$, where
$Q_2=\{0,1\}$, $\Sig=\{a,b\}$, $\delta_i$ ($\delta_k$) is defined by basis $B_i$ ($B_k$),
$F=\{j\}$ and $F'=\{\ell\}$, are uniformly minimal for all proper boolean functions, except $\oplus$ and its complement. So our main result applies only in some cases if $m=n=2$.
\qedb
\end{example}

\begin{proposition}
\label{prop:pairgraph}
Let $\cA=(Q_m,\Sig,\delta,0)$ and $\cA'=(Q_n,\Sig,\delta',0)$, with $m,n \ge 2$ and $\max(m,n)\ge 3$, be 
DFSs with transition semigroups  that are symmetric groups, and let $\cP$ be their direct product.
Then the following hold:
\be
\item
The pair graph of $\cP$ consists of strongly connected components---which we will call simply \emph{components}---of one of the following three types:
\bi
\item
$T_1\subseteq C_1=\{\{(i,j), (k,\ell)\}\colon i\neq k, j\neq \ell\}$, 
\item
$T_2\subseteq C_2=\{\{(i,j), (i,\ell)\}\colon j\neq \ell\}$, 
\item
$T_3\subseteq C_3=\{\{(i,j), (k,j)\}\colon i\neq k\}$. 
\ei
\item
Every state $(i,j)$ of the direct product $\cP$ appears in at least one pair in each component.\item
Each component has at least $mn/2\ge 3$ pairs.
\ee
\end{proposition}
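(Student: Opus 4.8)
The plan is to translate the whole statement into a fact about a group acting on a set. Since the transition semigroups of $\cA$ and $\cA'$ are $S_m$ and $S_n$, every letter of $\Sig$ induces a permutation of $Q_m\times Q_n$; hence the transition semigroup $H:=S_\cP$ of $\cP$ is a subsemigroup of the finite group $S_m\times S_n$, and is therefore itself a group. Because the letters act by permutations on the vertex set of the pair graph $G_\cP$, the strongly connected components of $G_\cP$ are precisely the orbits of $H$ acting coordinatewise on the pairs (two-element subsets) of $Q_m\times Q_n$. I would carry out the argument under the hypothesis that $\cP$ is connected, equivalently (by Theorem~\ref{thm: reach}) that $m\ne n$, or $m=n$ with non-conjugate bases; this is exactly the situation in which the Proposition is needed, and it makes $H$ act transitively on $Q_m\times Q_n$ (the product is then strongly connected, by Remark~\ref{rem:sc}).

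For part~(1) I would observe that for $(s,s')\in H$ and $i,k\in Q_m$ one has $i=k$ iff $s(i)=s(k)$, and likewise in the second coordinate. Hence the ``type'' of a pair --- whether its two first coordinates coincide, whether its two second coordinates coincide, or neither, exactly one of which holds since the pair has two distinct elements --- is an $H$-invariant. Therefore each of $C_1,C_2,C_3$ is a union of $H$-orbits, so every component is contained in exactly one $C_r$.

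For part~(2), fix a component $T$ and a pair $\{p_0,q_0\}\in T$, so that $T=\{\,\{h(p_0),h(q_0)\}:h\in H\,\}$. Given any state $r\in Q_m\times Q_n$, transitivity of $H$ yields $h\in H$ with $h(p_0)=r$, and then $h\{p_0,q_0\}=\{r,h(q_0)\}$ is a pair in $T$ containing $r$. Part~(3) then follows by double counting: each pair in $T$ has two elements and, by~(2), the pairs of $T$ cover $Q_m\times Q_n$, so $2|T|\ge mn$, i.e.\ $|T|\ge mn/2$; and $mn\ge 6$ because $m,n\ge 2$ and $\max(m,n)\ge 3$, whence $mn/2\ge 3$.

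The step I expect to be the genuine obstacle is not any of these computations but the reduction to connected $\cP$: parts~(2) and~(3) fail for a disconnected direct product --- for instance, with $m=n=3$, $a\colon(0,1,2)$, $b\colon(0,1)$ in $\cA$ and $a\colon(0,1,2)$, $b\colon(1,2)$ in $\cA'$, the pair graph has a $C_1$-component consisting of only three pairs and sitting inside a three-element $H$-orbit, so it misses six of the nine states. One therefore has to invoke Theorem~\ref{thm: reach} and work in the relevant case before the orbit argument applies. The hypothesis $\max(m,n)\ge 3$ enters only through the estimate $mn/2\ge 3$, and, as Example~\ref{ex:22} suggests, it cannot be dropped: when $m=n=2$ the symmetric-difference operation obstructs the uniform-minimality conclusions that this Proposition is designed to support.
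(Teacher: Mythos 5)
Your proof is correct and follows essentially the same route as the paper: the transition semigroup of $\cP$ is a group, so components of the pair graph are its orbits and preserve the type $C_1,C_2,C_3$; strong connectedness (Remark~\ref{rem:sc}) gives claim~(2); and the double count $2|T|\ge mn\ge 6$ gives claim~(3). Your explicit restriction to connected $\cP$ matches what the paper does implicitly, since its proof of claim~(2) invokes Remark~\ref{rem:sc}, which presupposes connectedness even though the proposition's statement omits that hypothesis.
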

\begin{proof}
The first claim follows since the transition semigroup of $\cP$ is a group.
The second claim holds because the direct product is strongly connected, by Remark~\ref{rem:sc}.
For the third claim, note that there are $mn$ states in $\cP$, but they can appear in pairs; hence the bound $mn/2$. Since we are assuming that $mn\ge 6$, the last claim follows.
\qed
\end{proof}

Now consider DFAs $\cD=(Q_m,\Sig,\delta,0,F)$ and $\cD'=(Q_n,\Sig,\delta',0,F')$, where
$\emp\subsetneq F \subsetneq Q_m$ and $\emp\subsetneq F' \subsetneq Q_n$.
A~state $\{(i,j), (k,\ell)\}$ of the pair graph of the direct product $\cP$ of $\cD$ and $\cD'$ is \emph{distinguishing} if and only if $(i,j)$ is final and $(k,\ell)$ is not, or \emph{vice versa}.

\begin{example}
\label{ex:23}
Suppose $m=2$ and $n=3$, $\cD$ and $\cD'$ are as above, $\delta$ is defined by the the basis  $(a\colon \one,b \colon (0,1))$ of $S_2$, and $\delta'$ by the basis $(a\colon(0,1,2), b\colon (0,1))$ of $S_3$.
The direct product of $\cD$ and $\cD'$ is connected as guaranteed by Theorem~\ref{thm: reach}, and has six states. The
 components in the pair graph are:
\bi
\item
$C_{1,1}= \{ \mathbf{\{(0,0), (1,1)\}}, \{(0,1), (1,2)\}, \{(0,2), (1,0)\} \}$, 
\item
$C_{1,2}=\{ \{(0,0), (1,2)\}, \mathbf{\{(0,1), (1,0)\}}, \{(0,2), (1,1)\} \}$,
\item
$C_2=\{ \{(0,0), (0,1)\}, \mathbf{\{(0,1), (0,2)\}}, \mathbf{\{(0,0), (0,2)\}}, \{(1,0), (1,1)\}, \\ \mbox{\hspace{1.1cm}}  \mathbf{\{(1,0), (1,2)\}}, \mathbf{\{(1,1), (1,2)\} \}}$, 
\item
$C_3=\{ \mathbf{\{(0,0), (1,0)\}}, \mathbf{\{(0,1), (1,1)\}}, \mathbf{\{(0,2), (1,2)\}} \}$.
\ei
One verifies that if $F=\{0\}$, $F'=\{0,1\}$ and the boolean function is symmetric difference,  the distinguishing pairs are  in boldface. 
We return to this case in Section~\ref{sec:small}.\qedb
\end{example}

\begin{example}
\label{ex:34}
If $m=3$ and $n=4$, $\delta$ is defined by the basis $(a\colon (0,1),b \colon (0,1,2))$ of $S_3$,  $\delta'$ by  the basis $(a\colon(0,1),b\colon (1,3,2))$ of $S_4$. One verifies that these bases are not conjugate.
The direct product $\cP$  is connected and has twelve states.

If $F=\{2\}$, $F'=\{0,1\}$ and intersection is the boolean function,
the component of the pair graph containing $\{(0,0),(0,3)\}$ is:\\
\mbox{\hspace{1cm} } $T= \{ \{(0,0), (0,3)\}, \{(0,1), (0,2)\}, \{(1,0), (1,2)\},
$\\
$\mbox{\hspace{1.85cm} }  \{(1,1), (1,3)\}, (2,0), (2,1)\}, \{(2,2), (2,3)\} \},$\\
and there are no distinguishing pairs. Hence states $(0,0)$ and $(0,3)$ are equivalent in $\cP$, as are also any two states appearing in the same pair of $T$.
Indeed, the minimal version of $\cP$ has exactly six states.
For symmetric difference, there are only four states, but there are twelve states for union.
Here our theorem applies only in some cases if $m=3$ and $n=4$.
\qedb
\end{example}

\begin{example}
\label{ex:44}
Suppose $m=n=4$,  $\delta$ is defined by the basis $(a\colon (0,1,2), b\colon (2,3))$, and  $\delta'$ by  the basis
$(a\colon(1,3,2), b\colon (0,2,1,3))$. If $F=\{0,1\}$ and
$F'=\{0,1\}$,
then the complexity of $L\oplus L'$ is 4, but all the other complexities are 12.
The same holds if $F=\{0,3\}$ and $F'=\{1,2\}$.
Again, our theorem applies only in some cases if $m=n=4$.
\qedb
\end{example}


\begin{lemma}
\label{lem:dist}
Let $\cD=(Q,\Sig,\delta,0,F)$ and $\cD'=(Q',\Sig,\delta',0,F')$, with $|Q|,|Q'| \ge 2$, be 
DFAs with transition semigroups  that are groups, and let $\cP=(Q\times Q', \Sig, (\delta,\delta'), (0,0), F\circ F')$ be their direct product.
Then $\cP$ is minimal if and only if every component of the pair graph $G_\cP$ of $\cP$ has a distinguishing pair.
\end{lemma}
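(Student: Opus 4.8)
The plan is to prove the two directions separately, exploiting the fact that in a DFA whose transition semigroup is a group, a state $q$ and a state $q'$ are equivalent if and only if the pair $\{q,q'\}$ never reaches a distinguishing pair in the pair graph. First I would observe the following crucial consequence of the group hypothesis applied to $\cP$: the transition semigroup $S_\cP$ of the direct product is a group (being a homomorphic image of a subsemigroup of $S_\cD\times S_{\cD'}$, or more simply because every generator acts as a permutation of $Q\times Q'$), so every transformation induced by a word is a bijection. Hence for a pair $\{q_1,q_2\}$ with $q_1\ne q_2$ and any word $w$, the images $(\delta,\delta')(q_1,w)$ and $(\delta,\delta')(q_2,w)$ are again distinct, so the pair graph has the property that every vertex has out-degree exactly one per letter and the successor of a two-element set is again a two-element set. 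Consequently, following the $a$- and $b$-edges out of $\{q_1,q_2\}$ traces out exactly the set of pairs reachable from $\{q_1,q_2\}$, and this reachable set is precisely the strongly connected component of $\{q_1,q_2\}$ together with, a priori, pairs outside it — but since the whole pair graph is a union of permutations acting on a finite vertex set, every vertex lies on a cycle, so in fact the reachable set from $\{q_1,q_2\}$ is exactly its strongly connected component (once inside a component you cannot leave it, because leaving would contradict the bijectivity/cycle structure). This reduces the problem to components.

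For the ``if'' direction: assume every component of $G_\cP$ has a distinguishing pair. First I would note $\cP$ is connected by hypothesis of the lemma — wait, the lemma as stated does not assume connectedness, so I should be careful. Actually re-reading: the lemma only assumes the transition semigroups are groups; it does not assume $\cP$ connected. So minimality of $\cP$ must be interpreted as ``the reachable part of $\cP$ has no two equivalent states,'' or the intended reading is that we already know $\cP$ is connected from Theorem~\ref{thm: reach}. I would adopt the cleanest phrasing: a DFA is minimal iff no two (reachable) states are equivalent, and handle reachability as inherited. Given that, pick two distinct reachable states $q_1,q_2$ of $\cP$; I must exhibit a word distinguishing them. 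Form the pair $\{q_1,q_2\}$ (they are distinct, so this is a genuine vertex of $G_\cP$), and let $T$ be its component. By hypothesis $T$ contains a distinguishing pair $\{r_1,r_2\}$, and since $T$ is strongly connected there is a word $w$ with $\{(\delta,\delta')(q_1,w),(\delta,\delta')(q_2,w)\}=\{r_1,r_2\}$. Since $\{r_1,r_2\}$ is distinguishing, exactly one of $r_1,r_2$ lies in $F\circ F'$; reading off any word accepted from that final state and not from the other (equivalently, using that one of the two is accepting and the other is not, so $w$ followed by $\eps$ already separates them when the accepting one is $r_i = (\delta,\delta')(q_i,w)$) gives a word separating $q_1$ from $q_2$. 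Hence no two distinct reachable states are equivalent, so $\cP$ is minimal.

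For the ``only if'' direction (contrapositive): suppose some component $T$ of $G_\cP$ has no distinguishing pair. Pick any pair $\{q_1,q_2\}\in T$. For every word $w$, the pair $\{(\delta,\delta')(q_1,w),(\delta,\delta')(q_2,w)\}$ stays inside $T$ (by the bijectivity/cycle argument above: the successor of a vertex of $G_\cP$ under a letter is again a vertex, and once in the component $T$ one stays in $T$, because the letter-maps restricted to the vertex set are permutations and $T$ is a union of cycles of these permutations — more precisely, $T$ strongly connected plus out-degree-one-per-letter forces the $w$-successor to remain in $T$). Since no pair in $T$ is distinguishing, for every $w$ the two states $(\delta,\delta')(q_1,w)$ and $(\delta,\delta')(q_2,w)$ are either both in $F\circ F'$ or both outside it; that is, $q_1$ and $q_2$ are equivalent. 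It remains to check that $q_1$ and $q_2$ are both reachable — this is where I would invoke connectedness of $\cP$ (Theorem~\ref{thm: reach} guarantees it in the situations where this lemma is applied; in the stated generality one takes $\cP$'s reachable part, and Proposition~\ref{prop:pairgraph}(2) ensures every state appears in some pair of each component so reachable states pair up with reachable states). Two distinct equivalent reachable states contradict minimality, so $\cP$ is not minimal.

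The main obstacle I anticipate is the bookkeeping around the fact that the successor of a vertex $\{p,q\}$ of the pair graph under a letter is always a two-element set (never collapses to a singleton) and that strongly connected components of $G_\cP$ are ``closed'' under the letter-actions — both of which rest squarely on the group hypothesis making all induced transformations bijections, so I would state and prove this closure property once at the start as a small observation and then use it freely in both directions. The only other delicate point is the precise meaning of ``minimal'' in the absence of an explicit connectedness hypothesis; I would resolve this by phrasing the statement and proof in terms of inequivalence of reachable states, which is the standard content of minimality, and by pointing to Proposition~\ref{prop:pairgraph}(2) for the compatibility of reachability with the component structure.
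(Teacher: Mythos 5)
Your proposal is correct and follows essentially the same route as the paper's proof: both rest on the observation that, since the transition semigroup of $\cP$ is a group, the strongly connected components of the pair graph are closed under the letter actions, so any pair can be carried by some word onto a distinguishing pair of its component (giving distinguishability) and a component without a distinguishing pair yields equivalent states. Your extra care about reachability and the closure property is a reasonable tightening of details the paper leaves implicit, but it does not change the argument.
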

\begin{proof}  Let $H$ be the transition group of the direct product $\cP=\cD\times\cD'$.  
Suppose $s\in H$ corresponds to the transformation of $Q\times Q'$ induced by some word $w\in \Sig^+$; 
then for $(i,j)\in Q\times Q'$, define $s\cdot (i,j)$ to be $(\delta(i,w),\delta'(j,w))$.

Suppose first that every component of $G_\cP$ has a distinguishing pair, but $\cP$ is not minimal.   Then there must be two distinct states $(i,j),(k,\ell)\in Q\times Q'$ such that, for $s\in H$, $s\cdot (i,j)$ is a final state if and only if $s\cdot (k,\ell)$ is also final.
By assumption, there is a distinguishing pair $\{(i',j'),(k',\ell')\}$ in the component of $G_\cP$ that contains $\{(i,j),(k,\ell)\}$.  By interchanging $(i',j')$ and $(k',\ell')$ if necessary, we may assume that there is some $s\in H$ such that 
$s\cdot (i,j) = (i',j')$ and $s\cdot (k,\ell)=(k',\ell')$.  But this is a contradiction.

Conversely, suppose there is a component $C$ without a distinguishing pair. Then, if 
$(i,j)$ and $(k,\ell)$ appear in the same pair, they must be equivalent since they can only reach states that are both final or both non-final. 
\qed
\end{proof}

\section{Symmetric Groups and Complexity of Boolean Operations}
\label{sec:main}

We begin with a well-known but apparently unpublished result.
\begin{lemma}
Let $n$ be a positive integer, let $G$ be either $S_n$ or $A_n$, and let $H$ be a subgroup of $G$ of index $m\le n$.  Then the following hold:
\begin{enumerate}
\item[(i)] if $n\neq 4$ and $m<n$, then $H$ is either $A_n$ or $S_n$;
\item[(ii)] if $m=n$ and $n\neq 6$, then there is some $i\in Q_n$ such that $H$ is the set of permutations in $G$ that fix $i$.
\item[(iii)] if $m=n=6$, then there is an automorphism $\phi$ of $S_6$ such that $\phi(H)$ is the set of elements that fix $0$.\end{enumerate}
\label{lem: bertrand}
\end{lemma}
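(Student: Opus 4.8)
The plan is to analyse $H$ through the permutation representation of $G$ on the left coset space $G/H$, which is transitive of degree $m\le n$ and hence amounts to a homomorphism $\rho\colon G\to S_m$ whose kernel $N=\bigcap_{g\in G}gHg^{-1}$ is the largest normal subgroup of $G$ contained in $H$. For $n\ge 5$ the only normal subgroups of $G$ are $\one$, $A_n$, and (when $G=S_n$) $S_n$, so the argument splits into: (a) showing $N\neq A_n$ by an order count, whence $N=\one$ and $\rho$ is injective; and (b) classifying the transitive degree-$n$ representations of $G$ when $m=n$. The values $n\le 4$ and $n=6$ must be treated separately, and these are exactly where the stated exceptions live.

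For part (i) assume $m<n$. If $n\ge 5$ and $N=\one$, then $\rho$ embeds $G$ into $S_m$, forcing $n!/2\le|G|\le m!\le (n-1)!$, which is false for $n\ge 3$; hence $N\neq\one$, so $A_n\subseteq N\subseteq H$ and therefore $H\in\{A_n,S_n\}$. For $n\le 3$ the subgroup lattices of $S_n$ and $A_n$ are small enough to inspect (for $n=3$ this is just the statement that $A_3$ is the only index-$2$ subgroup of $S_3$), while for $n=4$ the claim genuinely fails, since $V_4\trianglelefteq A_4$ has index $3<4$ and is neither $A_4$ nor $S_4$; the hypothesis $n\neq 4$ excludes precisely this.

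For parts (ii) and (iii) put $m=n$. When $n\ge 5$, the inclusion $A_n\subseteq N$ would give $[G:H]\le 2<n$, which is impossible, so $N=\one$ and $\rho\colon G\hookrightarrow S_n$ has transitive image of index $1$ or $2$, hence image $S_n$ if $G=S_n$ and the unique index-$2$ subgroup $A_n$ if $G=A_n$. Thus $\rho$ is an automorphism of $G$ (identified with its standard copy in $S_n$). Choosing the bijection $G/H\to\{0,\dots,n-1\}$ so that the coset $H$ maps to $0$, one checks that $g\in H\Leftrightarrow\rho(g)(0)=0$, i.e. $\rho(H)=\{\sigma\in G\colon\sigma(0)=0\}$, a point stabiliser. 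Extending $\rho$ to an automorphism of $S_n$ when $G=A_n$ (every automorphism of $A_n$ extends to one of $S_n$), we may take $\phi=\rho$ and obtain statement (iii) for $n=6$ directly. For $n\neq 6$ I then invoke the classical fact that every automorphism of $S_n$, and of $A_n$, is conjugation by some $g\in S_n$; writing $\rho$ as conjugation by $g$ turns $H=\rho^{-1}(\{\sigma\colon\sigma(0)=0\})$ into the stabiliser of $g^{-1}(0)$, which is statement (ii). The residual small values $n\in\{2,3,4\}$ of (ii) are handled by hand --- for instance, $S_4$ has no element of order $6$, so its order-$6$ subgroups are the four copies of $S_3$, and the order-$3$ subgroups of $A_4$ are the stabilisers of the four points.

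The obstacle here is one of bookkeeping rather than of depth: one must keep straight which ingredient applies for which $n$ --- the restriction on normal subgroups and the order count require $n\ge 5$, the input ``every automorphism is inner'' requires $n\neq 6$, and the finitely many leftover small groups are checked individually --- so that exactly the exceptions $n=4$ (in (i)) and $n=6$ (the (ii)/(iii) split) survive. A minor further point is the $G=A_6$ half of (iii): one needs that every automorphism of $A_6$ extends to $S_6$, so that (iii) can be phrased uniformly for an automorphism of $S_6$, and that this automorphism may interchange the two conjugacy classes of index-$6$ subgroups $A_5<A_6$, only one of which is the natural point stabiliser.
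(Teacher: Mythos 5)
Your proof is correct and takes essentially the same route as the paper's: the action of $G$ on the left cosets of $H$, analysis of the kernel of the resulting homomorphism into $S_m$ via the normal subgroup structure of $S_n$ and $A_n$ for $n\ge 5$, and the fact that every automorphism of $S_n$ or $A_n$ is conjugation by an element of $S_n$ when $n\neq 6$. The only minor differences are cosmetic: you settle the $n=4$ instance of (ii) by directly enumerating the order-$6$ subgroups of $S_4$ and the order-$3$ subgroups of $A_4$ (the paper instead shows the kernel is trivial and reuses the embedding argument), and you make explicit the extension of an automorphism of $A_6$ to $S_6$ in (iii), a point the paper passes over silently.
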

\begin{proof} For $n\le 3$, both (i) and (ii) are clear.  Thus assume that $n\ge 4$.  
Let $X=\{H=x_0H,\dots ,x_{m-1}H\}$ be the set of left cosets of $H$ in $G$.   Note that $G$ acts on $X$ via left multiplication; more explicitly,  for $g\in G$, there is a permutation $s\in S_m$ such that $gx_i H = x_{s(i)}H$ for all $i\in \{0,\dots ,m-1\}$.  The map $g\mapsto s$ gives a non-trivial homomorphism $\phi$ from $G$ into $S_m$.  Furthermore, the kernel of $\phi$ is necessarily contained in $H$, since the kernel of $\phi$ is $\{g\in G\colon gx_iH=x_iH~{\rm for~all~}i=0,\dots ,m-1\}$ and this is contained in $\{g\in G\colon gH=H\}=H$.    

If $m<n$, then $|G|>|S_m|$ and hence $\phi$ must have a non-trivial kernel, which is a normal subgroup of $G$.  For $n\ge 5$, the only normal subgroups of $G$ are either $A_n$ or $S_n$. Since the kernel of $\phi$ is a normal subgroup contained in $H$,  $H$ must be either $A_n$ or $S_n$, if $n\ge 5$.   This establishes (i).

On the other hand, if $m=n$ and $n\not\in \{4,6\}$,  we have a non-trivial homomorphism $\phi\colon G\to S_n$.  If the kernel is non-trivial, then  again $H$ must be $A_n$ or $S_n$, which contradicts the fact that $H$ has index $n$.  If the kernel is trivial, then $\phi$ gives an embedding of $G$ into $S_n$.  If $G=S_n$ then $\phi$ is an automorphism.  If $G=A_n$ then the image of $\phi$ is an index-two subgroup of $S_n$ and hence necessarily $A_n$.  Thus $\phi$ gives an automorphism of $G$ in either case.  For $n\neq 6$, all automorphisms of $S_n$ or $A_n$ are given by conjugation by an element of $S_n$ (see \cite[Chapter 3.2]{Suz82}).  Since $h \in H$ stabilizes the coset $H=x_0H$, the definition of the map $\phi$ now gives $\phi(h)(0)=0$, and so $\phi(H)$ stabilizes 0.  Since $\phi$ is given by conjugation by an element of $S_n$, we see that $H$ consists of all elements of $G$ that stabilize some $i\in Q_n$.  Thus we have proved that (ii) holds except when $n=4$.  This argument also gives (iii) immediately.  

If $m=n=4$, as before we have a non-trivial homomorphism $\phi:G\to S_4$ and the kernel must be one of $S_4$, $A_4$, $K_4$ (the Klein 4-group), or the trivial group.  Since the kernel of $\phi$ is contained in $H$ and $H$ has order $3$ or $6$,  the kernel is in fact trivial and $\phi$ is an embedding.   The argument used above now proves (ii) in this case.  
\qed  
\end{proof}

The following lemma, like Theorem~\ref{thm: reach}, deals with reachability. The conditions in the lemma, however, are useful for determining reachability in the pair graph of $\cA\times\cA'$, rather than in  $\cA\times\cA'$ itself.
\begin{lemma}
Let  $\Sig=\{a,b\}$, let $\cA=(Q_m,\Sig, \delta,0)$ and $\cA'=(Q_n,\Sig, \delta',0)$ be 
semiautomata with transition semigroups  that are  symmetric groups  of degrees $m$ and $n$ respectively with $m\le n$, $n\neq 4$ and $(m,n)\neq (6,6)$.  
Let $H$ be the transition semigroup of $\cA\times \cA'$, and let $\pi_1$ and $\pi_2$ be the natural projections from $H$ onto $S_m$ and $S_n$ respectively. If
$H_0=\{h\in H\colon\pi_1(h)(0)=0\},$ 
then 
\be
\item $\pi_2(H_0)$ is either $S_n$ or $A_n$, or is the stabilizer of a point in $Q_n$. 
\item
$\pi_2(H_0)$ is the stabilizer of a point if and only if $m=n$, and in this case the direct product $\cA\times \cA'$ is not connected.
\ee
\label{lem: H0}
\end{lemma}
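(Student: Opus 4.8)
The plan is to mimic the structure of the proof of Theorem~\ref{thm: reach}, but to replace the ad hoc counting arguments there with the clean group-theoretic dichotomy supplied by Lemma~\ref{lem: bertrand}. First I would observe, exactly as in the proof of Theorem~\ref{thm: reach}, that $H$ is a subgroup of $S_m\times S_n$, that the projections $\pi_1,\pi_2$ are surjective, and that $H_0$ has index $m$ in $H$ (it is the stabilizer of $\{0\}\times Q_n$ under the action of $H$ on the $m$ sets $\{i\}\times Q_n$, and $\pi_1(H)=S_m$ acts transitively on these). Consequently $\pi_2(H_0)$ is a subgroup of $\pi_2(H)=S_n$ of index at most $m\le n$. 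Since $n\neq 4$ and $(m,n)\neq(6,6)$, Lemma~\ref{lem: bertrand} applies with $G=S_n$: if the index is strictly less than $n$ then $\pi_2(H_0)\in\{A_n,S_n\}$, and if the index equals $n$ then $\pi_2(H_0)$ is the stabilizer of a point. This immediately yields claim~(1), and also shows that if $m<n$ then $\pi_2(H_0)$ is $A_n$ or $S_n$ and hence is \emph{not} a point stabilizer (for $n\ge 3$ a point stabilizer has index $n>2$, so it is neither $A_n$ nor $S_n$; the small cases $n\le 2$ are trivial since then $m=n$). So the only way $\pi_2(H_0)$ can be a point stabilizer is $m=n$.

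For claim~(2) it remains to handle the converse direction and the connectedness statement. If $m=n$ and $\pi_2(H_0)$ is \emph{not} a point stabilizer, then by part~(1) it is $A_n$ or $S_n$, in either case acting transitively on $Q_n$, and then the transitivity argument from the proof of Theorem~\ref{thm: reach} shows $\cA\times\cA'$ is connected: given $(i,j)$ and $(i',j')$, use $\pi_1(H)=S_m$ to move both first coordinates to $0$, then use transitivity of $\pi_2(H_0)$ on the second coordinates. Conversely, if $m=n$ and $\pi_2(H_0)$ \emph{is} a point stabilizer, say of $0\in Q_n$ after relabelling, then no element of $H$ sends $(0,0)$ to $(0,i)$ with $i\neq 0$ (such an element would lie in $H_0$ and its $\pi_2$-image would move $0$), so $\cA\times\cA'$ is not connected. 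Combining these two bullets: $\cA\times\cA'$ is connected precisely when $\pi_2(H_0)$ acts transitively, i.e.\ precisely when $\pi_2(H_0)$ is not a point stabilizer; equivalently (using part~(1) and $m=n$ being forced), $\pi_2(H_0)$ is a point stabilizer if and only if $m=n$ and $\cA\times\cA'$ is not connected. This is exactly the assertion of~(2).

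I would not expect a genuine obstacle here, since the heavy lifting has been outsourced to Lemma~\ref{lem: bertrand}; the main point requiring care is the bookkeeping that ties ``$\pi_2(H_0)$ is a point stabilizer'' to ``$m=n$ and disconnected'' as an if-and-only-if rather than two separate implications. One must be slightly careful that the hypotheses $n\neq 4$ and $(m,n)\neq(6,6)$ of Lemma~\ref{lem: bertrand} are met: here we invoke the lemma only with the group $S_n$ and index $m\le n$, and the excluded cases of the lemma for $G=S_n$ are exactly $n=4$ and $m=n=6$, which are excluded by hypothesis. (Note that when $m<n$ we only ever need part~(i) of Lemma~\ref{lem: bertrand}, for which $n\neq 4$ already suffices; the condition $(m,n)\neq(6,6)$ is needed only in the diagonal case $m=n=6$, where part~(ii) is used.) A final small remark worth inserting: it is not necessary to separately discuss conjugacy of bases in this lemma, since that connection is precisely what Theorem~\ref{thm: reach} already establishes; Lemma~\ref{lem: H0} is stated purely in terms of $\pi_2(H_0)$ because that is the form in which it will be applied to the pair graph in the next section.
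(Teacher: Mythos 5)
Your proof is correct and follows essentially the same route as the paper's: establish that $H_0$ has index $m$ in $H$ (the paper does this by exhibiting the $m$ distinct cosets $h_iH_0$, you by orbit--stabilizer, which is the same computation), deduce that $\pi_2(H_0)$ has index at most $m\le n$ in $S_n$, invoke Lemma~\ref{lem: bertrand} for the trichotomy, and then show disconnectedness in the stabilizer case because no element of $H$ can move $(0,0)$ to $(0,1)$. The only divergence is that you read the ``if and only if'' in part~(2) as an equivalence with ``$m=n$ and $\cA\times\cA'$ is not connected'' and supply the converse (transitivity implies connectedness) by recycling the argument of Theorem~\ref{thm: reach}, whereas the paper proves only the forward implication and leaves the rest implicit; your reading is the defensible one and costs nothing.
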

\begin{proof}
For Part 1, 
since $\pi_1(H)=S_m$, for each $i\in \{0,\dots ,m-1\}$ there is some $h_i\in H$ such that $\pi_1(h_i)$ takes $0$ to $i$.  For a given $h\in H$, $\pi_1(h)$ takes $0$ to $j$ for some $j\in \{0,1,\dots ,m-1\}$, and thus $h_j^{-1}h\in H_0$ and so $h\in h_j H_0$.  However, since $\pi_1(h)$ takes $0$ to $j$, we have $h_i^{-1}h \not\in H_0$ and thus $h \not\in h_iH_0$ for $i \ne j$. Thus the cosets $h_0H,\ldots ,h_{m-1}H$ are distinct, and  $H_0$ has index $m$ in $H$.  Since 
$$\pi_2(H)\subseteq \bigcup_{i=0}^{m-1} \pi_2(h_i)\pi_2(H_0),$$ 
$\pi_2(H_0)$ has index at most $m$ in $\pi_2(H)=S_n$.   If $n\neq 4$ and $m<n$ then $\pi_2(H_0)$ is either $A_n$ or $S_n$ by Lemma \ref{lem: bertrand}. 
If $m=n$ and $n\neq 6$, then $\pi_2(H_0)$ has index $n$ in $S_n$ and hence must be the stabilizer of a some $i\in Q_n$ by Lemma \ref{lem: bertrand}.  

For Part 2, suppose that $m=n$ and $\pi_2(H_0)$ is the stabilizer of a point in $Q_n$.
By relabelling if necessary, we may assume that $\pi_2(H_0)$ stabilizes $0$.  Hence, if $h\in H$ sends $(0,0)$ to $(0,j)$ then $j=0$.  In particular, there is no $h\in H$ that sends $(0,0)$ to $(0,1)$ or that sends $(0,1)$ to $(0,0)$ and so $\cA\times \cA'$ is necessarily not connected.
\qed
\end{proof}

\begin{lemma} 
\label{lem:main}
Let $\cA=(Q_m,\Sig, \delta,0)$ and $\cA'=(Q_n,\Sig, \delta',0)$ be 
semiautomata with transition semigroups  that are the symmetric groups  of degrees $m$ and $n$, respectively with $m\le n$, $m\ge 2$, $n\ge 5$, and $(m,n)\neq (6,6)$.  If $\cA\times \cA'$ is connected then the pair graph of $\cA\times \cA'$ has exactly three connected components:
$C_1=\{\{(i,j), (k,\ell)\}\colon i\neq k, j\neq \ell\}$, $C_2=\{\{(i,j), (i,\ell)\}\colon j\neq \ell\}$, and 
$C_3=\{\{(i,j), (k,j)\}\colon i\neq k\}$. 

\end{lemma}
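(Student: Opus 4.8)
The plan is to recast the statement in terms of group orbits. Write $H$ for the transition group of $\cA\times\cA'$ and let $\pi_1,\pi_2$ be its coordinate projections; these are onto $S_m$ and $S_n$ respectively, since the transition semigroups of $\cA$ and $\cA'$ are $S_m$ and $S_n$. Because $\Sig=\{a,b\}$, the two transformations of $Q_m\times Q_n$ induced by $a$ and $b$ generate $H$, and traversing an edge of the pair graph (in either direction) applies one of these transformations or its inverse, each of which lies in $H$; since, conversely, every element of $H$ and its inverse is a product of such generators, the connected components of the pair graph are exactly the orbits of $H$ acting on the set of two-element subsets of $Q_m\times Q_n$. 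So it suffices to identify $H$ and count its orbits.

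First I would determine $H$, up to relabelling $Q_n$. Set $N=\ker\pi_1$; then $\pi_2(N)$ is normal in $\pi_2(H)=S_n$, hence equals $\{\one\}$, $A_n$, or $S_n$ because $n\ge 5$. If $\pi_2(N)=\{\one\}$, then $\pi_2$ factors through $H/N\cong S_m$ and gives a surjection $S_m\to S_n$; as $m\le n$ this forces $m=n$ and exhibits $H$ as the graph of an automorphism $\psi$ of $S_n$. Since $(m,n)\neq(6,6)$ we have $n\neq 6$, so $\psi$ is inner (as recalled in the proof of Lemma~\ref{lem: bertrand}); then the bases of $\cA$ and $\cA'$ are conjugate, which by Theorem~\ref{thm: reach} contradicts the hypothesis that $\cA\times\cA'$ is connected. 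Hence $\pi_2(N)\supseteq A_n$, so $\{\one\}\times A_n\le H$. As $\{\one\}\times A_n$ is normal in $S_m\times S_n$, the quotient $H/(\{\one\}\times A_n)$ embeds in $S_m\times\Z_2$ and still projects onto $S_m$; the only subgroups of $S_m\times\Z_2$ that project onto $S_m$ are $S_m\times\Z_2$ itself, the graph $\{(\sigma,\mathrm{sgn}\,\sigma):\sigma\in S_m\}$ of the sign homomorphism, and $S_m\times\{\one\}$, and the last is impossible since it would give $\pi_2(H)=A_n\neq S_n$. Therefore either $H=S_m\times S_n$ or $H=\{(\sigma,\tau)\in S_m\times S_n:\mathrm{sgn}\,\sigma=\mathrm{sgn}\,\tau\}$.

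It remains to count orbits. The three sets $C_1,C_2,C_3$ are $H$-invariant, pairwise disjoint, non-empty (using $m,n\ge 2$), and together exhaust the two-element subsets of $Q_m\times Q_n$, so I only need to show that each is a single $H$-orbit. Take two pairs in the same $C_i$; label their elements so that I must move $(i_1,j_1)\mapsto(i_2,j_2)$ and $(k_1,\ell_1)\mapsto(k_2,\ell_2)$. Whichever $C_i$ we are in, I can pick $\sigma\in S_m$ meeting the required constraints on first coordinates (one or two prescribed values, which is consistent since the distinct-coordinate conditions defining $C_i$ are preserved), and then pick $\tau\in S_n$ meeting the required constraints on second coordinates and, when $H$ is the index-two subgroup, of the same parity as $\sigma$: the permutations of $Q_n$ satisfying $\tau$'s (at most two) prescribed values form a coset of a copy of $S_{n-1}$ or $S_{n-2}$, and since $n\ge 5$ this contains odd permutations, so the parity can be matched. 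Then $(\sigma,\tau)\in H$ carries the first pair to the second, so $C_i$ is one orbit, and $H$ has exactly the three orbits $C_1,C_2,C_3$. I expect the main obstacle to be the identification of $H$: the subtle step is ruling out the configuration where $H$ is the graph of an automorphism of $S_n$, which is precisely where the hypotheses $m\le n$ and $(m,n)\neq(6,6)$ are used, through Theorem~\ref{thm: reach}.
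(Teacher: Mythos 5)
Your proof is correct, but it takes a genuinely different route from the paper's. The paper never identifies the transition group $H$ of $\cA\times\cA'$: it argues locally, showing via Lemma~\ref{lem: H0} (where the connectedness hypothesis enters, ruling out the point-stabilizer case) and the index bounds of Lemma~\ref{lem: bertrand} that the $\pi_2$-image of the subgroup of $H$ whose first projection fixes one or two prescribed points is $A_n$ or $S_n$, hence doubly transitive on $Q_n$, and then moves first coordinates by double transitivity of $S_m$ and second coordinates by that subgroup to connect each $C_i$. You instead classify $H$ outright: your Goursat-style analysis of $N=\ker\pi_1$ (using that the only normal subgroups of $S_n$ for $n\ge 5$ are trivial, $A_n$, $S_n$, and that the subgroups of $S_m\times\Z_2$ projecting onto $S_m$ are the three you list) shows $H$ is $S_m\times S_n$, the parity-matched subgroup, or (only when $m=n$) the graph of an automorphism of $S_n$; the last case is where you use $(m,n)\neq(6,6)$, since for $n\neq 6$ the automorphism is inner, the bases are conjugate, and Theorem~\ref{thm: reach} contradicts connectedness -- so connectedness enters through Theorem~\ref{thm: reach} rather than Lemma~\ref{lem: H0}. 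The remaining orbit computation, with the parity-matching trick on cosets of point stabilizers (which needs only $n\ge 4$), is sound, as is your preliminary identification of pair-graph components with $H$-orbits (the paper uses the same fact implicitly in Proposition~4). What each approach buys: yours yields a sharper structural statement (an explicit description of $H$) from which the three-orbit count is transparent and which could be reused elsewhere; the paper's avoids any appeal to $\mathrm{Aut}(S_n)=\mathrm{Inn}(S_n)$ at this stage and recycles Lemma~\ref{lem: bertrand} and Lemma~\ref{lem: H0}, which it needs anyway for the small and $(6,6)$ cases, and its local index argument is the template later adapted when $H_i$ can have order $120$ in the $(6,6)$ analysis. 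Both proofs use $n\ge 5$ and exclude $(6,6)$ for the same underlying reason, the outer automorphism of $S_6$.
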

\begin{proof}
We let $H$ denote the transition semigroup of $\cA\times \cA'$.  In addition to this, we let 
$C_1=\{\{(i,j), (k,\ell)\}\colon i\neq k, j\neq \ell\}$, $C_2=\{\{(i,j), (i,\ell)\}\colon j\neq \ell\}$, and 
$C_3=\{\{(i,j), (k,j)\}\colon i\neq k\}$.  We show that each of $C_1,C_2,C_3$ is strongly connected.  Note that each of $C_1$, $C_2$, $C_3$ is necessarily a union of connected components. 

We show that $C_1$ is strongly connected.  Suppose we have pairs $\{(i,j), (k,\ell)\}$ and $\{(i',j'), (k',\ell')\}$ with $i,k$ distinct, $i',k'$ distinct, $j,\ell$ distinct, and $j',\ell'$ distinct.  Since $S_m$ acts doubly transitively on $Q_m$ when $m\ge 2$,  there is some $s\in H$ that sends $(i,j)$ to $(i',j'')$ and $(k,\ell)$ to $(k',\ell'')$ for some $j'',\ell''\in Q_n$.

Thus we may assume without loss of generality that $i'=i$ and $k'=k$.  Let $H_0$ be the subgroup of $S_m\times S_n$ consisting of all $x \in H$ such that $\pi_1(x)$ fixes $i$.  By Lemma \ref{lem: H0},  since we assume that $\cA\times\cA'$ is connected,
$\pi_2(H_0)$ is not a stabilizer of a point in $Q_n$.
Hence $\pi_2(H_0)$ is either $S_n$ or $A_n$.  
Let $H_1$ denote the subgroup of $S_m\times S_n$ consisting of all $x\in H$ such that $\pi_1(x)$ fixes $i$ and $k$.  
By the argument used in Lemma \ref{lem: H0} to show that $\{h \in H\colon \pi_1(h)(0) = 0\}$ has index $m$ in $H$, we see that $\pi_2(H_1)$ has index at most $m-1$ in $\pi_2(H_0)$.  
Thus $\pi_2(H_1)$ is a subgroup of $A_n$ or $S_n$ of index at most $n-1$, and hence must again be $A_n$ or $S_n$ by Lemma \ref{lem: bertrand}.  
Since $A_n$ and $S_n$ both act doubly transitively on $Q_n$, there is some $h\in H$ that sends $(i,j)$ to $(i,j')$ and $(k,\ell)$ to $(k,\ell')$ whenever $\ell$ and $\ell'$ are distinct.  This proves that $C_1$ is indeed a strongly connected component.

Next, consider pairs $\{(i,j), (i,k)\}$ with $j,k$ distinct.  For given $\{(i',j'),(i',k')\}$ with $j',k'$ distinct, there is some element $s\in H$ such that $\pi_1(s)(i)=i'$ and thus $s$ sends
$(i,j)$ to $(i',j'')$ and $(i,k)$ to $(i',k'')$ for some $j'',k''\in Q_n$ with $j''\neq k''$.  Now note that $\pi_2(\{x\in H\colon\pi_1(x)(i')=i'\})$ is either $S_n$ or $A_n$ by Lemma \ref{lem: H0}, and thus acts doubly transitively on $Q_n$.  It follows that there is some $s'\in H$ such that
$s'$ sends $(i',j'')$ to $(i',j')$ and $(i',k'')$ to $(i',k')$.  Then $s's$ sends $\{(i,j), (i,k)\}$ to $\{(i',j'),(i',k')\}$ and thus $C_2$ is strongly connected.

Finally, consider pairs $\{(i,j), (k,j)\}$ and $\{(i',j'),(k',j')\}$ with $i,k$ distinct and $i',k'$ distinct.  From the argument used in proving $C_1$ is strongly connected, we see that we can find $s\in H$ that sends
$\{(i,j), (k,j)\}$ to $\{(i',j''), (k',j'')\}$ for some $j''$.  As in the proof that $C_1$ is strongly connected,  we see that the image of the set of $h\in H$ for which $\pi_1(h)$ stabilizes both $i'$ and $k'$ under $\pi_2$ acts transitively on $Q_n$; hence we can find $s'\in H$ that  sends $\{(i',j''), (k',j'')\}$ to $\{(i',j'),(k',j')\}$.  Thus $C_3$ is strongly connected.
\qed
\end{proof} 

We are now in a position to prove our main result in all except a few cases which are dealt with in Section~\ref{sec:small}.

\begin{corollary}
Let $m$ and $n$ be positive integers with $n\ge m \ge 2$, $n\ge 5$, and $(m,n)\neq (6,6)$, and let $\cA=(Q_m,\Sig, \delta,0)$ and $\cA'=(Q_n,\Sig, \delta',0)$ be 
semiautomata with transition semigroups  that are the symmetric groups  of degrees $m$ and $n$. Suppose that the direct product $\cA\times \cA'$ is connected and assume further that sets of final states are added to $\cA$ and $\cA'$ and that $\circ$ is a proper binary boolean function that defines the set of final states of the direct product $\cP$. Then $\cP$ is minimal for any such $\circ$.
\label{cor:main}

\end{corollary}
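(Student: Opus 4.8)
The plan is to reduce to Lemmas~\ref{lem:main} and~\ref{lem:dist}. Since $\cA\times\cA'$ is connected it is strongly connected by Remark~\ref{rem:sc}. The pair graph of $\cP=\cD\times\cD'$ depends only on the transition structure, hence coincides with the pair graph of $\cA\times\cA'$, which by Lemma~\ref{lem:main} has exactly the three components $C_1$, $C_2$, $C_3$ listed there. By Lemma~\ref{lem:dist}, $\cP$ is minimal as soon as each of $C_1$, $C_2$, $C_3$ contains a distinguishing pair, so it suffices to exhibit such a pair in each component, for every admissible pair of final sets $\emp\subsetneq F\subsetneq Q_m$, $\emp\subsetneq F'\subsetneq Q_n$ and every proper $\circ$. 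I would write $\tau(i,j)=(\one[i\in F],\one[j\in F'])\in\{0,1\}^2$, so that $(i,j)$ is final in $\cP$ iff $\circ(\tau(i,j))=1$; since $F$ and $F'$ are proper and non-empty, every bit value is attained by some $\one[i\in F]$ and by some $\one[j\in F']$.

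For $C_2$ I would use that $\circ$ is neither constant nor a function of its first variable alone, so there is a bit $c$ with $\circ(c,0)\neq\circ(c,1)$. Picking $i$ with $\one[i\in F]=c$, and $j\notin F'$, $\ell\in F'$ (so $j\neq\ell$), the pair $\{(i,j),(i,\ell)\}\in C_2$ is distinguishing because $\circ(\tau(i,j))=\circ(c,0)\neq\circ(c,1)=\circ(\tau(i,\ell))$. The argument for $C_3$ is the mirror image: as $\circ$ is not a function of its second variable alone, fix a bit $c'$ with $\circ(0,c')\neq\circ(1,c')$, choose $j$ with $\one[j\in F']=c'$ and distinct $i\notin F$, $k\in F$, and take $\{(i,j),(k,j)\}\in C_3$.

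For $C_1$ I would split into cases. If some $u,v\in\{0,1\}^2$ differing in \emph{both} coordinates satisfy $\circ(u)\neq\circ(v)$, then realizing $\tau(i,j)=u$, $\tau(k,\ell)=v$ forces $i\neq k$ and $j\neq\ell$, so $\{(i,j),(k,\ell)\}\in C_1$ is distinguishing. A short inspection of the $\{0,1\}^2$ table shows the only proper $\circ$ admitting no such $u,v$ are $\oplus$ and $\leftrightarrow$, for which one has $\circ(\epsilon_1,\epsilon_2)\neq\circ(1-\epsilon_1,\epsilon_2)$ for all $\epsilon_1,\epsilon_2$. In that remaining case, since $n\ge 5$ at least one of $F'$, $Q_n\setminus F'$ contains two distinct elements $j,\ell$; letting $\epsilon_2$ be the corresponding bit and choosing $i\notin F$, $k\in F$, the pair $\{(i,j),(k,\ell)\}\in C_1$ is distinguishing because $\circ(\tau(i,j))=\circ(0,\epsilon_2)\neq\circ(1,\epsilon_2)=\circ(\tau(k,\ell))$.

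With a distinguishing pair now in each of $C_1$, $C_2$, $C_3$, Lemma~\ref{lem:dist} gives that $\cP$ is minimal, completing the argument. The $C_2$ and $C_3$ steps are routine once one records the right consequence of ``$\circ$ is not a function of one variable''; I expect the only real obstacle to be $C_1$, where one must notice that the ``differ in both coordinates'' construction handles every proper $\circ$ except $\oplus$ and $\leftrightarrow$ and then dispose of those two separately --- the single place where the propriety of $F'$ and the bound $n\ge 5$ (in fact $n\ge 3$ would suffice here) are used.
\qed
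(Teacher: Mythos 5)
Your proposal is correct and takes essentially the same route as the paper: invoke Lemma~\ref{lem:main} to get the three components $C_1$, $C_2$, $C_3$ of the pair graph, then use Lemma~\ref{lem:dist} to reduce minimality of $\cP$ to exhibiting a distinguishing pair in each component, with the $C_2$, $C_3$ pairs coming from $\circ$ not being a function of one variable alone. The only (cosmetic) difference is in $C_1$: the paper argues by contradiction that the absence of a distinguishing pair there would force every state of $\cP$ to be non-final, whereas you do a direct case split on $\circ$ (functions separating inputs differing in both coordinates versus $\oplus$ and its complement, the latter handled using $n\ge 3$); both are routine verifications of the same fact.
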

\begin{proof} 
By Lemma~\ref{lem:main}, the pair graph of $\cA\times \cA'$ has three strongly connected components:
$C_1=\{\{(i,j), (k,\ell)\}\colon i\neq k, j\neq \ell\}$, $C_2=\{\{(i,j), (i,\ell)\}\colon j\neq \ell\}$, and 
$C_3=\{\{(i,j), (k,j)\}\colon i\neq k\}$.  

For $(i,j)\in Q_m\times Q_n$, define $f((i,j))$ to be $1$ if $(i,j)$ is a final state, and  $0$, otherwise.  We first claim that $C_1$ has a distinguishing pair, that is, there are pairs $(i,j)$ and $(k,\ell)$ in $Q_m\times Q_n$ with $i\neq k$ and $j\neq \ell$ such that $f((i,j))\neq f((k,\ell))$.  

Suppose no distinguishing pair exists in $C_1$.
Assume without loss of generality that $f((0,0))=0$.
 then $f((i,j))=0$ whenever $i\neq 0$ and $j\neq 0$.  
Given $k\in Q_n$, we pick $\ell\in Q_n\setminus \{0,k\}$; this is always possible since $n\ge 3$.  
Since $\{(0,k),(1,\ell)\}$ is in $C_1$ and we have assumed that $C_1$ has no distinguishing pairs, we must have $f((0,k))=f((1,\ell))$.
But $f(1,\ell)$ must be 0, for otherwise we would have the distinguishing pair $\{(0,0),(1,\ell)\}$.
Hence $f((0,k))=f((1,\ell))=0$.   
Thus we have
$f((i,j))=0$ for every $i\in Q_m$ and every $j\in Q_n\setminus \{0\}$.  
Similarly, we must have $f((i,0))=f((0,1))=0$ for $i\in Q_m\setminus \{0\}$,  and hence $f$ is the zero function, a contradiction.  

The fact that $C_2$ and $C_3$ both have distinguishing pairs follows from the fact that $\circ$ is a proper boolean function.
By Lemma~\ref{lem:dist}, we conclude that  $\cA\times \cA'$ is uniformly minimal.
\qed
\end{proof}

\section{Results for Small Values of $m$ and $n$}
\label{sec:small}
We have proved our main result in the case that $m\le n$ and $n\ge 5$ if $(m,n)\neq (6,6)$.  By symmetry we may always assume that $m\le n$.  The case $(m,n)=(2,2)$ was handled in Example~\ref{ex:22}, that of $(m,n)=(3,4)$, in Example~\ref{ex:34}, and that
of $(m,n)=(4,4)$, in Example~\ref{ex:44}. Therefore the only cases that we need to consider are those with $(m,n)\in \{(2,3),(2,4),(3,3),(6,6)\}$.  

In this section we prove the following result:
\begin{theorem}
\label{thm: small}
Let $\cA=(Q_m,\Sig, \delta,0)$ and $\cA'=(Q_n,\Sig, \delta',0)$ be 
semiautomata with transition semigroups that are  $S_m$ and $S_n$ respectively. Suppose that the direct product $\cA\times \cA'$ is connected, sets of final states are added to $\cA$ and $\cA'$, and  $\circ$ is a proper binary boolean function that defines the set of final states of the direct product $\cP$. If $(m,n)\in \{(2,3),(2,4),(3,3),(6,6)\}$, then $\cP$ is minimal for any such $\circ$.
\end{theorem}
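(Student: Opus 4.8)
The strategy is to reduce Theorem~\ref{thm: small} to the same mechanism that drove Corollary~\ref{cor:main}, namely the pair-graph analysis combined with Lemma~\ref{lem:dist}: once we know the connected components of the pair graph of $\cA\times\cA'$, we only need to check that \emph{every} component contains a distinguishing pair for \emph{every} proper boolean $\circ$ and every non-trivial choice of final-state sets $F\subseteq Q_m$, $F'\subseteq Q_n$. The obstruction in Corollary~\ref{cor:main} was that Lemma~\ref{lem:main} (and the group-theoretic Lemmas~\ref{lem: bertrand} and~\ref{lem: H0}) needed $n\ge 5$ and $(m,n)\neq(6,6)$ to force the pair graph to split into exactly the three big components $C_1,C_2,C_3$. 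So the bulk of the work here is to redo the pair-graph decomposition for the four leftover pairs $(m,n)\in\{(2,3),(2,4),(3,3),(6,6)\}$ by hand or by small ad hoc arguments, and then rerun the distinguishing-pair argument.

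First I would dispose of $(6,6)$ cleanly. Here $n=6\ge 5$, and the only reason $(6,6)$ was excluded from Lemma~\ref{lem:main} is the exceptional outer automorphism of $S_6$ appearing in Lemma~\ref{lem: bertrand}(iii). But by Proposition~\ref{prop:DFS} each of $\cA$ and $\cA'$ is uniformly minimal, and the hypothesis that $\cA\times\cA'$ is connected means (by Theorem~\ref{thm: reach}) that the bases of the two symmetric groups are not conjugate when $m=n$. One checks that the argument of Lemma~\ref{lem:main} still produces the three components $C_1,C_2,C_3$: the only place Lemma~\ref{lem: H0} is invoked is to conclude that $\pi_2(H_0)$ is $S_n$, $A_n$, or a point-stabilizer, and for $n=6$ Lemma~\ref{lem: bertrand}(iii) says the index-$6$ case is a point-stabilizer \emph{up to an automorphism of $S_6$}; but an automorphism of $S_6$ conjugates the whole situation to one where the relevant subgroup is a genuine point-stabilizer, and connectedness of $\cA\times\cA'$ rules that out exactly as in Lemma~\ref{lem: H0}, Part 2. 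Hence for $(6,6)$ the pair graph has the three components $C_1,C_2,C_3$ and the proof of Corollary~\ref{cor:main} applies verbatim.

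Next I would handle the three genuinely small cases $(2,3)$, $(2,4)$, $(3,3)$. For these, $mn\le 12$, the groups $S_2,S_3,S_4$ are small, and the number of bases is small (up to conjugacy: $S_2$ has the three bases of Example~\ref{ex:22}, $S_3$ has essentially the two of Example~\ref{ex:34}, $S_4$ has finitely many), so the plan is an exhaustive finite verification: for each pair of non-conjugate bases (so that $\cA\times\cA'$ is connected, by Theorem~\ref{thm: reach}), enumerate the connected components of the pair graph $G_\cP$. By Proposition~\ref{prop:pairgraph} each component lies inside one of $C_1$, $C_2$, $C_3$, meets every state of $\cP$, and has at least $mn/2$ pairs; Examples~\ref{ex:23} and~\ref{ex:34} show that in these small cases a single $C_i$ can \emph{split} into several components (e.g. $C_1$ splits into $C_{1,1},C_{1,2}$ when $m=2,n=3$), which is precisely why these cases are delicate. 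So I would list the components and then show that for every proper $\circ$ and every valid $(F,F')$ each of those components carries a distinguishing pair: the components inside $C_2$ and $C_3$ always do, because a proper $\circ$ forces both projections of the final-state behaviour to be non-constant (the argument already used at the end of the proof of Corollary~\ref{cor:main}); for the components inside $C_1$, I would argue as in the first half of that proof — if some component in $C_1$ had no distinguishing pair, propagate the constancy of $f$ along that component and along the $C_2$/$C_3$ components that it meets (every state of $\cP$ lies on each component, by Proposition~\ref{prop:pairgraph}(2)), forcing $f$ to be constant on all of $Q_m\times Q_n$, which contradicts properness of $\circ$. The point requiring care is that when $C_1$ splits, a single $C_1$-component need not by itself touch all states in the "right" pattern, so the propagation must be run simultaneously across a $C_1$-component and the $C_2$- and $C_3$-components, using that they share states; checking that the shared states suffice to force $f\equiv 0$ is exactly the finite bookkeeping that must be done for each $(m,n)$ and each choice of bases. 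Then Lemma~\ref{lem:dist} gives minimality of $\cP$.

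The main obstacle is this last finite verification for $(2,3)$, $(2,4)$, $(3,3)$: the pair graph need not have the clean three-component structure, so one cannot simply quote Corollary~\ref{cor:main}, and one must confirm — for every pair of non-conjugate bases and every proper boolean operation — that no component of $G_\cP$ is distinguishing-free. I expect that $(2,4)$ is the tightest of the three (it is adjacent to the genuinely bad case $(3,4)$ of Example~\ref{ex:34} and to $(4,4)$ of Example~\ref{ex:44}), and that the argument there will rely on $n=4$ but $m=2$ keeping the "$C_1$-type" components large enough that the propagation closes up; the cleanest way to be sure is a direct (computer-assisted) enumeration of all cases, which is what I would present, together with the uniform propagation argument above to explain why the enumeration always succeeds.
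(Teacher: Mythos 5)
There is a genuine gap, and it is concentrated in the $(6,6)$ case. You claim that connectedness of $\cA\times\cA'$ rules out the exceptional index-$6$ subgroup arising from Lemma~\ref{lem: bertrand}(iii), so that $\pi_2(H_0)$ must be $S_6$ or $A_6$ and the three-component decomposition of Lemma~\ref{lem:main} goes through verbatim. This is false. When the transition semigroup of $\cA\times\cA'$ is the graph $H=\{(s,\phi(s))\colon s\in S_6\}$ of an \emph{outer} automorphism $\phi$, the product \emph{is} connected (the bases are not conjugate precisely because $\phi$ is not inner), yet $\pi_2(H_0)=\phi(\mathrm{Stab}(0))$ is a transitive subgroup of order $120$ --- isomorphic to a point stabilizer but not equal to one, which is exactly the content of the outer automorphism. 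Lemma~\ref{lem: H0}, Part~2 only applies when $\pi_2(H_0)$ literally stabilizes a point, so connectedness does not eliminate this case. Worse, in this situation the subgroup $\pi_2(H_1)$ fixing two points in the first coordinate has order $24$, which cannot act doubly transitively on $Q_6$ (a doubly transitive group on $6$ points has order divisible by $30$), so the proof that $C_1$ is a single strongly connected component breaks down and you cannot quote Corollary~\ref{cor:main}. This is precisely why $(6,6)$ is excluded from Lemma~\ref{lem:main} in the first place; the paper handles it by a separate argument (the equivalence-class machinery of Remark~\ref{rem:equiv} together with Lemma~\ref{lem:s6} on order-$120$ subgroups and fixed points, with a case analysis on the class size $|X|\in\{6,9,12,18\}$), not by recovering the three-component structure.

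For $(2,3)$, $(2,4)$ and $(3,3)$ your plan --- enumerate pair-graph components for all non-conjugate bases and check distinguishing pairs, possibly by computer --- is in principle viable but is only a plan: the "propagation" step that is supposed to force $f\equiv 0$ is left as unexecuted bookkeeping, and no enumeration is actually exhibited. The paper instead avoids the pair graph entirely in these cases: it observes that if $\cP$ is not minimal then the Myhill-type equivalence classes on $Q_m\times Q_n$ all have the same cardinality dividing $mn$ (since the transition semigroup is a group acting transitively), and then uses explicit elements of $H$ (such as $(\one,(0,1,2))$) to show that any non-trivial class forces $\circ$ to be constant or a function of one variable. That argument is short, uniform over the choice of bases, and requires no case enumeration; you would do well to adopt it, and in any event the $(6,6)$ case needs to be reworked from scratch.
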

The theorem is proved in four parts, since each case requires  a different argument.
The following remark, however, is common to all parts.

\begin{remark}
\label{rem:equiv}
If  there is a proper boolean function $\circ$ for which $\cP$ is not minimal, then there must be two distinct states $(i,j), (i',j')\in Q_m\times Q_n$  such that $s\cdot (i,j)$ is final if and only if $s\cdot (i',j')$ is final for any $s$ in the transition semigroup of~$\cP$.  

Define an equivalence relation on $Q_m \times Q_n$ by declaring that $(i,j)\sim (i',j')$ precisely when  $s\cdot (i,j)$ is final if and only if $s\cdot (i',j')$ is final for all $s$.  
This equivalence relation partitions $Q_m\times Q_n$ into disjoint parts.  Moreover, each equivalence class must have the same size, since $\cP$ is connected; in particular, each equivalence class has size equal to a divisor of $mn$.  If each part in the partition has size~$1$, then $\cP$ is minimal and there is nothing to prove.   
If there is exactly one part in the partition, then either all states of $\cP$ are final or all non-final; in either case $\circ$ is not proper, a contradiction. 
\end{remark}
\subsection{$(m,n)=(6,6)$}

\begin{lemma}
Let $\phi:S_6\to S_6$ be an outer automorphism.  
\begin{enumerate}
\item[(1)] Let $T$ be a subgroup of $S_6$ of order $120$. 
If $T$ has a fixed point, then  $\phi(T)$ does not, and 
if $\phi(T)$ has a fixed point, then $T$ does not.
\item[(2)] 
If $T$ is as in (1), then either $T$ or $\phi(T)$ has a fixed point. 
\item[(3)]
If $N$ is a subgroup of index $2$ in the stabilizer subgroup of some $i\in Q_6$, then $\phi(N)$ acts doubly transitively on $Q_6$.
\end{enumerate}
\label{lem:s6}
\end{lemma}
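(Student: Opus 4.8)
The plan is to exploit the classical structure of $\mathrm{Aut}(S_6)$. I would take as known the following standard facts (see \cite[Chapter 3.2]{Suz82}), which are partly already encoded in Lemma~\ref{lem: bertrand}(iii): every subgroup of $S_6$ of order $120$ has index $6$ and is isomorphic to $S_5$; there are exactly two conjugacy classes of such subgroups, namely the point stabilizers (which are intransitive, with orbits of sizes $5$ and $1$ on $Q_6$) and a class of subgroups acting transitively on $Q_6$; and, since $\mathrm{Out}(S_6)\cong C_2$, every outer automorphism interchanges these two classes. The transitive-versus-$(5,1)$ alternative itself follows from the absence of subgroups of index $2$, $3$, or $4$ in $S_5$, which forces any faithful action of an order-$120$ group on $Q_6$ to have all orbits of size in $\{1,5,6\}$, hence orbit type $(6)$ or $(5,1)$; and the class-interchange property follows because Lemma~\ref{lem: bertrand}(iii) supplies, for every index-$6$ subgroup $H$, an automorphism carrying $H$ to a point stabilizer, which cannot be inner when $H$ is transitive, so must be outer, while all outer automorphisms agree modulo inner ones.

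Granting this, I would prove parts (1) and (2) together. An order-$120$ subgroup $T$ of $S_6$ has a fixed point if and only if it is intransitive, \ie\ if and only if it is a point stabilizer. If $T$ is a point stabilizer, then $\phi(T)$ lies in the other (transitive) class, so $\phi(T)$ has no fixed point; and if $\phi(T)$ has a fixed point, then $\phi(T)$ is a point stabilizer, so $T=\phi^{-1}(\phi(T))$ lies in the transitive class (as $\phi^{-1}$ is again outer) and has no fixed point. This is (1). For (2): $T$ is either a point stabilizer, in which case $T$ has a fixed point, or $T$ is transitive, in which case $\phi(T)$ is a point stabilizer and so has a fixed point.

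For part (3), I would first observe that $N$, being an index-$2$ subgroup of the stabilizer $\mathrm{Stab}(i)\cong S_5$, must be its unique index-$2$ subgroup, namely $\mathrm{Stab}(i)\cap A_6\cong A_5$; in particular $N\le A_6$ and $|N|=60$. Since $A_6$ is characteristic in $S_6$, we get $\phi(N)\le A_6$ with $|\phi(N)|=60$, and $\phi(N)\cong A_5$ is simple. By part (1), $\phi(\mathrm{Stab}(i))$ has no fixed point, so, being of order $120$, it is transitive on $Q_6$; and $\phi(N)$ is a normal subgroup of index $2$ in $\phi(\mathrm{Stab}(i))$. I would then show $\phi(N)$ is transitive: since $\phi(N)\cong A_5$ is simple it acts faithfully, so its orbits have sizes in $\{1,5,6\}$; were the orbit type $(5,1)$, the normal overgroup $\phi(\mathrm{Stab}(i))$ would permute the $\phi(N)$-orbits while preserving their sizes, hence fix the singleton orbit, contradicting the transitivity of $\phi(\mathrm{Stab}(i))$. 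Thus $\phi(N)$ is transitive, and a one-point stabilizer $\phi(N)_\omega$ has order $10$. This stabilizer acts faithfully on the $5$ remaining points of $Q_6$ (any element of $\phi(N)$ fixing $\omega$ and every other point is the identity), so it embeds in $S_5$; by Cauchy's theorem it contains an element of order $5$, which in $S_5$ is a $5$-cycle and hence acts transitively on those $5$ points. Therefore $\phi(N)_\omega$ is transitive on $Q_6\setminus\{\omega\}$, \ie\ $\phi(N)$ acts doubly transitively on $Q_6$.

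I expect the main obstacle to be stating and justifying the foundational claim cleanly --- that an outer automorphism of $S_6$ swaps the two conjugacy classes of order-$120$ subgroups; after that, all three parts reduce to elementary orbit bookkeeping together with the well-known lists of small-index subgroups of $S_5$ and $A_5$. It is also worth flagging that part (3) genuinely needs $\phi$ to be outer: an inner automorphism sends $N$ to a point stabilizer inside $A_6$, which fixes a point of $Q_6$ and is therefore certainly not doubly transitive.
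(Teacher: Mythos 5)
Your overall strategy is sound and genuinely different from the paper's: you deduce (1) and (2) from the classification of the order-$120$ subgroups of $S_6$ into two conjugacy classes (point stabilizers and transitive copies of $S_5$) that are interchanged by any outer automorphism, whereas the paper proves (1) directly from the fact that an outer automorphism sends transpositions to products of three disjoint transpositions (together with $\phi^2$ being inner), and proves (2) via the action of $S_6$ on the left cosets of $T$. In (3), your replacement of the paper's ``$H\phi(H)$ would have order $240$'' step by the observation that $\phi(N)$, being normal in the transitive group $\phi(\mathrm{Stab}(i))$, cannot have orbit type $(5,1)$ (the overgroup would have to fix the singleton orbit) is clean and correct, and your final double-transitivity step (the order-$10$ point stabilizer contains an element of order $5$, hence a $5$-cycle on the remaining points) coincides with the paper's.

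There are, however, two points to repair. First, your justification of the orbit-type claim is wrong as stated: you invoke ``the absence of subgroups of index $2$, $3$, or $4$ in $S_5$,'' but $A_5$ has index $2$ in $S_5$, and at that stage you do not yet know $T\cong S_5$, so the argument is also circular. The claim itself is true and has a one-line proof in the style the paper uses elsewhere: if $T$ has orbits of sizes $k_1,\dots,k_r$, then $T$ embeds in $S_{k_1}\times\cdots\times S_{k_r}$, so $120\le k_1!\cdots k_r!$, which forces orbit type $(6)$ or $(5,1)$. (Your use of the analogous index argument for $\phi(N)\cong A_5$ in part (3) is fine, since $A_5$ genuinely has no proper subgroup of index less than $5$.) Second, your sketched derivation of the class-interchange property covers only the direction ``transitive $\mapsto$ point stabilizer'' (via Lemma~\ref{lem: bertrand}(iii)), whereas what (1) and (3) actually use is the direction ``point stabilizer $\mapsto$ no fixed point.'' To get that direction you need either the existence of a transitive order-$120$ subgroup (e.g.\ $\mathrm{PGL}(2,5)$, or an explicit citation of the standard classification), or the short direct argument that an automorphism carrying a point stabilizer to a point stabilizer permutes the six point stabilizers and is therefore inner, contradicting that $\phi$ is outer. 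Either patch is brief, but as written that half of the interchange is asserted rather than proved.
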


\begin{proof}
We first show (1).  If $T$ stabilizes some point, then $T$ contains a transposition since it has order 120.  
Since $\phi$ is outer, it sends any transposition to a product of three disjoint transpositions~\cite{MR0096724}. 
Since the product of three disjoint transpositions has no fixed points,  $\phi(T)$ cannot have a fixed point.  
Similarly,  if $\phi(T)$ has a fixed point then $\phi^2(T)$ cannot have a fixed point.  
Since $\phi^2$ is inner~~\cite[p. 133]{Rot65}, $\phi^2(T)$ is conjugate to $T$ and hence $T$ cannot have a fixed point.  

For (2), we must show that at least one of $T$ and $\phi(T)$ fixes some point.  
Suppose that $T$ does not have a fixed point.  
Then $S_6$ acts on the left cosets of $T$ by left multiplication, and this gives a map $\psi\colon S_6\to S_6$ (we think of the copy of $S_6$ on the right-hand side as acting on cosets of $T$).  
Note that the kernel of $\psi$ is contained in $T$, and since $A_6$ and $S_6$ are the only non-trivial normal subgroups of $S_6$, 
 the kernel of $\psi$ is trivial and so $\psi$ is an automorphism.  
 Note that $\psi(T)$ stabilizes the coset $T$ by definition of our map and hence $\psi(T)$ has a fixed point in $S_6$.  
 Since $T$ does not have a fixed point, 
 $\psi$ cannot be an inner automorphism.  
 Since the inner automorphism group of $S_6$ has index $2$ in the full automorphism group~\cite[p. 133]{Rot65}, 
  $\phi$ can be obtained by composing $\psi$ with an inner automorphism and so $\phi(T)$ has a fixed point.  Thus we have shown that if $T$ has no fixed point,  then  $\phi(T)$ does. 
 It follows that exactly one of $T$ and $\phi(T)$ has a fixed point.

We now prove (3).  We first show that $\phi(N)$ acts transitively on $Q_6$.  If it did not, then $\phi(N)$ would be contained in a conjugate of a subgroup of $S_6$ of the form $S_i\times S_{6-i}$ for some $i\in \{1,2,3\}$.  Since $|\phi(N)|=60\ge i! (6-i)!$ for $i=2,3$, we see that $\phi(N)$ would necessarily fix some $j\in Q_6$.  By replacing $\phi$ by $\phi$ composed with some appropriate inner automorphism, we may assume that our outer automorphism $\phi$ has the property that both $\phi(N)$ and $N$ fix $i\in Q_6$.  This means that $\phi(N)$ and $N$ are both contained in the stabilizer subgroup, $H$, of $i\in Q_6$, which is a group of order $120$.  We claim that $\phi(H)=H$.  If not, then $\phi(N)$ is normal of index $2$ in both $H$ and $\phi(H)$, and so by the second isomorphism theorem~\cite[p. 26]{Rot65}, $H\phi(H)$ generates a group of order $240$ in $S_6$.  But this is impossible by Lemma \ref{lem: bertrand}, since this group would have index $3$ in $S_6$.  It follows that $H=\phi(H)$, which contradicts (1), since both $H$ and $\phi(H)$ fix $i$.  Thus $\phi(N)$ acts transitively on $Q_6$.  

To show that $\phi(N)$ acts doubly transitively, it suffices to prove that the set of elements of $Q_6$ that stabilize $i\in Q_6$ acts transitively on $Q_6\setminus \{i\}$.   The orbit of $i$ under the action of $\phi(N)$ has size $6$; hence the stabilizer is an index-$6$ subgroup of $\phi(N)$ and thus has size $10$.   
Since it has size $10$, it must contain a 5-cycle $s$ on the elements $Q_6\setminus \{i\}$, and for given $j,k\in Q_6\setminus \{i\}$, we have that $s^m\cdot j=k$ for some $m\ge 0$.  
It follows that $\phi(N)$ acts doubly transitively on $Q_6$.  
\qed
\end{proof}
\begin{proposition}
Theorem~\ref{thm: small} holds for $(m,n)=(6,6)$.
\end{proposition}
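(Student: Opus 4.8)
The plan is to let $H$ be the transition semigroup of $\cA\times\cA'$, a subgroup of $S_6\times S_6$ with surjective coordinate projections $\pi_1,\pi_2$ onto the two copies of $S_6$, and to set $H_0=\{h\in H\colon\pi_1(h)(0)=0\}$. Exactly as in the proof of Lemma~\ref{lem: H0}, $H_0$ has index $6$ in $H$, so $\pi_2(H_0)$ has index at most $6$ in $\pi_2(H)=S_6$; and since $\cA\times\cA'$ is connected, $\pi_2(H_0)$ acts transitively on $Q_6$, for otherwise it would be a point stabiliser and $(0,0)$ could not reach $(0,i)$ for $i\ne0$. By Lemma~\ref{lem: bertrand}, a transitive subgroup of $S_6$ of index at most $6$ is $S_6$, $A_6$, or (by part (iii); inner automorphisms are excluded because they yield intransitive point stabilisers) an index-$6$ subgroup of the form $\phi(\mathrm{Stab}(0))$ for an \emph{outer} automorphism $\phi$ of $S_6$. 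This gives two cases.

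If $\pi_2(H_0)\in\{A_6,S_6\}$, I would repeat the proof of Lemma~\ref{lem:main} verbatim: that argument used the hypothesis $(m,n)\ne(6,6)$ only to deduce, via Lemma~\ref{lem: H0}, that the relevant $\pi_2$-images are $A_6$ or $S_6$ in the connected case (something we now have by hypothesis, since $A_6$ and $S_6$ are normal), and thereafter used only Lemma~\ref{lem: bertrand}(i) applied to subgroups of index at most $n-1=5<6$; so the pair graph again has exactly the three strongly connected components $C_1,C_2,C_3$, and the proof of Corollary~\ref{cor:main} (using only $n\ge3$, properness of $\circ$, and Lemma~\ref{lem:dist}) then shows each component has a distinguishing pair, so $\cP$ is minimal. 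In the remaining case $\pi_2(H_0)=\phi(\mathrm{Stab}(0))$ with $\phi$ outer; a short order count shows that then $\ker\pi_2=\{1\}$ (if not, $\pi_1(\ker\pi_2)$ is a nontrivial normal subgroup of $S_6$, hence contains $A_6$, forcing $|\pi_2(H_0)|\ge360$), so $|H|=720$, $\ker\pi_1=\{1\}$, and $H=\{(g,\alpha(g))\colon g\in S_6\}$ for an automorphism $\alpha$; $\alpha$ must be outer, else the bases would be conjugate and $\cA\times\cA'$ disconnected by Theorem~\ref{thm: reach}. So we may assume $H=\{(g,\phi(g))\}$ with $\phi$ outer. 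Since $\phi(\mathrm{Stab}(0))$ contains $\phi(N)$ for the index-$2$ subgroup $N$ of $\mathrm{Stab}(0)$, and $\phi(N)$ is doubly transitive on $Q_6$ by Lemma~\ref{lem:s6}(3), the group $\pi_2(H_0)$ is doubly transitive; combined with transitivity of $\pi_1(H)=S_6$ this makes the pair-graph component $C_2$ of pairs with equal first coordinate a single strongly connected component, and symmetrically $C_3$ (equal second coordinate) is a single component. By Remark~\ref{rem:equiv} and the argument of Lemma~\ref{lem:dist}, both $C_2$ and $C_3$ contain distinguishing pairs, so any nontrivial class of the equivalence relation $\sim$ of Remark~\ref{rem:equiv} consists only of pairs differing in both coordinates; such a class has all first coordinates distinct and all second coordinates distinct, so its common size $d$ divides $mn=36$ and is at most $6$, i.e.\ $d\in\{1,2,3,4,6\}$.

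To eliminate $d\in\{2,3,4\}$, note that the $\sim$-classes form an $H$-invariant partition on which $H$ acts transitively, so the setwise stabiliser of the class $C$ through $(0,0)$ has order $20d$ and contains the point stabiliser $K_0=\{(g,\phi(g))\colon g(0)=0,\ \phi(g)(0)=0\}\cong\mathrm{Stab}(0)\cap\phi^{-1}(\mathrm{Stab}(0))$, the stabiliser of a point inside a transitive exotic $S_5$, hence a group of order $20$ with a normal subgroup of order $5$; a generator $g_5$ of that subgroup is a $5$-cycle on $Q_6\setminus\{0\}$, it permutes the rows of $C$ and fixes $0$, so it permutes the remaining rows, and a $5$-cycle admits no invariant proper nonempty subset, giving $d-1\in\{0,5\}$ and thus $d\in\{1,6\}$. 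It remains to rule out $d=6$, which is the heart of the matter. When $d=6$ each of the six classes is a full permutation matrix, so writing $C_j$ for the class through $(0,j)$ and $(i,\lambda_i(j))$ for its element in row $i$ (hence $\lambda_0=\one$ and $\lambda_i\in S_6$), the classes form a Latin square and in particular every $\lambda_i$ with $i\ne0$ is fixed-point-free. Invariance of the partition under $(g,\phi(g))$ for $g\in\mathrm{Stab}(0)$ shows $(g,\phi(g))$ sends $C_j$ to $C_{\phi(g)(j)}$ and therefore $\lambda_{g(i)}=\phi(g)\lambda_i\phi(g)^{-1}$; taking $g\in\mathrm{Stab}(0,1)$ and $i=1$ shows $\lambda_1$ centralises $\phi(\mathrm{Stab}(0,1))$. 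As $\phi$ carries the $3$-cycles of $\mathrm{Stab}(0,1)$ to fixed-point-free elements of order $3$, the group $\phi(\mathrm{Stab}(0,1))$ — which lies in the transitive group $\phi(\mathrm{Stab}(0))$ — is itself transitive on $Q_6$ (as one checks, e.g.\ inside $PGL_2(5)$), and the centraliser in $S_6$ of a transitive subgroup isomorphic to $S_4$ is cyclic of order at most $2$, generated by a fixed-point-free involution of cycle type $2^3$; so $\lambda_1\in\{\one,\theta\}$ with $\theta$ of type $2^3$. The possibility $\lambda_1=\one$ forces every $\lambda_i=\one$, i.e.\ each $C_j$ is the $j$-th column, contradicting the fact that $C_3$ has a distinguishing pair; hence $\lambda_1=\theta$ and, by conjugacy, every $\lambda_i$ with $i\ne0$ has cycle type $2^3$.

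To finish, I would count the final set $F\circ F'$ of $\cP$ row by row and column by column: being a union of whole classes, it meets each row (resp.\ column) in the same number of points, which forces $\circ\in\{\oplus,\leftrightarrow\}$ and $|F|=|F'|=3$, and shows that the three class-labels making up the final set form a $3$-set $S$ with $\lambda_i(S)=S$ for every $i\notin F$. Choosing such an $i$ with $i\ne0$ (possible since $|Q_6\setminus F|=3$) yields a permutation $\lambda_i$ of cycle type $2^3$ that maps a $3$-element set to itself, which is impossible, since it would then have a fixed point. Hence $d=1$ and $\cP$ is minimal. I expect the $d=6$ step to be the main obstacle: non-minimality forces a rigid Latin-square configuration, and ruling it out requires pinning down the maps $\lambda_i$ through the action of the outer automorphism $\phi$ — this is precisely where Lemma~\ref{lem:s6} and the classical description of the effect of the outer automorphism of $S_6$ on cycle types come into play.
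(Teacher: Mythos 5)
Your argument is correct, and in the decisive case it departs genuinely from the paper's. The reduction is the same at first: if $\pi_2(H_0)\in\{A_6,S_6\}$ you rerun Lemma~\ref{lem:main} and Corollary~\ref{cor:main} (the paper does exactly this), and otherwise $H$ is the graph of an outer automorphism $\phi$ by Theorem~\ref{thm: reach}. From there the paper studies the $\sim$-class $X$ of $(5,5)$ directly: it shows $|X|\ge 6$ via an $F_{20}$-orbit argument, kills $|X|=18$ with double transitivity of $\phi(A_5)$ (Lemma~\ref{lem:s6}(3)), kills $|X|\in\{9,12\}$ with Lemma~\ref{lem: bertrand}(i), and kills $|X|=6$ very quickly by noting the class stabilizer is $(T,\phi(T))$ with $|T|=120$ and invoking the fixed-point dichotomy of Lemma~\ref{lem:s6}(1)--(2). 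You instead first prove that the pair-graph pieces $C_2$ and $C_3$ are single strongly connected components (double transitivity of $\phi(\mathrm{Stab}(0))$ and of $\phi^{-1}(\mathrm{Stab}(j))$, again via Lemma~\ref{lem:s6}(3)), so by Lemma~\ref{lem:dist} no two equivalent states share a row or a column and the common class size is at most $6$; a $5$-cycle in the order-$20$ point stabilizer of $(0,0)$ then eliminates sizes $2,3,4$; and for size $6$ you extract the Latin-square maps $\lambda_i$, show $\lambda_1$ centralizes the transitive $S_4$ $\phi(\mathrm{Stab}(0,1))$ so that all $\lambda_i$ ($i\ne 0$) have type $2^3$, and finish with a row/column count forcing $\circ\in\{\oplus,\leftrightarrow\}$, $|F|=|F'|=3$ and a $2^3$-permutation stabilizing a $3$-set, which is absurd. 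Your route is longer in the size-$6$ case than the paper's fixed-point dichotomy, but it buys extra structural information (the precise shape of any would-be counterexample) and uses only standard facts beyond the paper's lemmas (cycle-type behaviour of the outer automorphism, centralizers of transitive subgroups), all of which check out. One small slip: the claim $\lambda_i(S)=S$ holds for the rows $i$ on the \emph{same side of $F$ as $0$}, not necessarily for all $i\notin F$ (if $0\in F$ then $\lambda_i(S)=Q_6\setminus S$ for $i\notin F$); since both $F$ and its complement have three elements, you can always pick $i\ne 0$ with the same membership status as $0$, so the contradiction survives after this one-line repair.
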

\begin{proof}  Let $H$ denote the transition semigroup of $\cP$.  Then $H$ is a subgroup of $S_6\times S_6$.  We let $\pi_1$ and $\pi_2$ denote the two natural surjections from $H$ onto $S_6$.  For $i\in Q_6$, let $H_i$ denote the subgroup of $S_6$ obtained by applying $\pi_2$ to the collection of $x\in H$ such that $\pi_1(x)$ fixes $i$.  Then $H_i$ has index at most $6$ in $S_6$ and hence must be one of $A_6$, $S_6$, or a group of order $120$.  If each $H_i$ is either $A_6$ or $S_6$, then we may follow the argument of Lemma \ref{lem:main} to show that the pair graph of $\cA\times \cA'$ has exactly three connected components; namely,
$C_1=\{\{(i,j), (k,\ell)\}\colon i\neq k, j\neq \ell\}$, $C_2=\{\{(i,j), (i,\ell)\}\colon j\neq \ell\}$, and 
$C_3=\{\{(i,j), (k,j)\}\colon i\neq k\}$.  Then the argument from Corollary \ref{cor:main} shows that $\cP$ is minimal whenever $\circ$ is a proper binary boolean function.

After relabelling if necessary, it is sufficient to consider the case that $H_0$ is a group of order $120$.   Let $N=\{s\in S_6\colon (\one, s)\in H\}$.  Then $N$ is a normal subgroup of $S_6$ and hence must be one of  $A_6$, $S_6$, or the trivial subgroup.  
Since $N\subseteq H_0$, $N$ must be trivial.  
If we define $\phi$ to be $\phi=\pi_2\circ \pi_1^{-1}\colon S_6\to S_6$,
then $\phi$ is an automorphism, and so $H=\{(s,\phi(s))\colon s\in S_6\}$.   Since $\cP$ is connected,  $\phi$ cannot be an inner automorphism by Theorem~\ref{thm: reach}.  We claim that in this case, $\cP$ is necessarily minimal for any proper boolean function.

Suppose  there is a proper boolean function $\circ$ for which $\cP$ is not minimal.  
Define the equivalence $\sim$ as in Remark~\ref{rem:equiv};
then each equivalence class has size equal to a divisor of $36$, and we can ignore the cases where that size is 1 or 36.    

Let $E = \{(t,\phi(t))\colon t {\rm ~stabilizes}~ 5\}$.  Then $E$ has size $120$ and $\pi_2(E)$ acts transitively on $Q_6$, since $\phi$ is outer.  Let $F=\{x \in E \colon \pi_2(x) {\rm ~stabilizes}~ 5\}$; then $F$ has size $20$. 
Since $F$ has size $20$, it contains an element of the form $(t,\phi(t))$ where both $t$ and $\phi(t)$ are $5$-cycles that permute $\{0,1,2,3,4\}$.  It follows that $\pi_1(F)$ and $\pi_2(F)$ both act transitively on $\{0,1,2,3,4\}$. 

Now let $X\subseteq Q_6\times Q_6$ denote the equivalence class of $(5,5)$; then $X$ has size at least $6$.  Since $|X|$ divides $36$, we see that $|X|\in \{6,9,12,18, 36\}$, but, as noted before, 36 can be ignored.  

We now do a case-by-case analysis.  

\item[Case 1: $|X| = 18$.]

In this case  the orbit of $X$ under $H$ has size $2$.  Let $N$ denote the set of elements of $H$ that stabilize $X$.  Then $N$ has index $2$ in $H$, and hence must be equal to $\{(s,\phi(s))\colon s \in A_6\}$.

We now claim that if  $(i,j)$ and $(i,j')$ are in $X$ for some $i$ and distinct $j$, $j'$, then $(i,k)$ is in $X$ for all $k$.  To see this, observe that the set of $s$ in $N$ for which $\pi_1(s)\cdot i=i$ has index $6$ in $N$, and so it is a subgroup of order $60$.  
Thus it is of the form $(N_1,\phi(N_1))$ where $N_1$ is the copy of $A_5$ inside the set of elements of $\pi_1(H)$ that stabilize $i$, which is isomorphic to $S_5$.   
Notice that $\phi(N_1)$ acts doubly transitively on $Q_6$ by Lemma \ref{lem:s6}, since $\phi$ is an outer automorphism of $S_6$, and so we get the result.  A similar result holds for $(Q_6 \times Q_6) \setminus X$, which means that, for a fixed $i$, the set of $k$ for which $(i,k)$ is in $X$ is either $Q_5$ or empty, and so our boolean function is a function of the first variable, a contradiction.

\item[Case 2. $|X|\in \{9,12\}$.]

In this case, the orbit of $X$ under $H$ has size either $3$ or $4$ and thus the stabilizer of $X$ has index $3$ or $4$ in $H$.  But $H$ is isomorphic to $S_6$ and hence has no subgroups of index $3$ or $4$ by Lemma \ref{lem: bertrand} (i).

\item[Case 3. $|X|=6$.]
By the remarks above, we have 
$$X=\{(5,5),(0,\tau(0)),\ldots ,(4,\tau(4))\}$$ for some permutation $\tau$ of $\{ 0,1,2,3,4\}.$

The orbit of $X$ under $H$ has size $36/|X| = 6$, and so the stabilizer, $N$, of $X$ in $H$ has size $120$.  This means  that $N = (T, \phi(T))$ where $T$ is a subgroup of $S_6$ of order $120$.  Either $T$ or $\phi(T)$ must have a fixed point by Lemma \ref{lem:s6}.  Without loss of generality, $T$ has a fixed point and $\phi(T)$ does not.  We extend $\tau$ to a permutation of $Q_6$ by declaring that $\tau(5)=5$.  If $T$ fixes $i$, we have $N\cdot (i,\tau(i)) = (i,\tau(i))$, and so $\phi(T)$ fixes $\tau(i)$, a contradiction.  The result follows.
\qed
\end{proof}

\subsection{$(m,n)=(3,3)$}
\begin{lemma}   
Suppose that $H$ is a $2$-generated subgroup of $S_3\times S_3$ with the property that the two natural projections into $S_3$ are surjective.  Then either $H=S_3\times S_3$ or there is some permutation $t\in S_3$ such that $H=\{(s,t^{-1}st)\colon s\in S_3\}$.  
\label{lem:s3}
\end{lemma}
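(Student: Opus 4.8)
The plan is to analyze $H$ through its intersections with the two ``axis'' subgroups $S_3 \times \{\one\}$ and $\{\one\} \times S_3$. Set $N_1 = \{s \in S_3 : (s,\one) \in H\}$ and $N_2 = \{t \in S_3 : (\one,t) \in H\}$. A standard argument (Goursat's lemma in disguise) shows that $N_1$ is normal in the first $S_3$ (because $H$ projects onto the first factor and conjugating $(s,\one)$ by any $(u,v)\in H$ keeps it in the axis subgroup), and likewise $N_2$ is normal in the second $S_3$. Since the normal subgroups of $S_3$ are $\{\one\}$, $A_3$, and $S_3$, there are only a few cases. Moreover $H / (N_1 \times N_2)$ is the graph of an isomorphism between the two quotients $S_3/N_1$ and $S_3/N_2$, so $|S_3/N_1| = |S_3/N_2|$, which forces $N_1$ and $N_2$ to have the same order and hence (since $|A_3|=3 \ne 6 = |S_3|$) to be ``the same type'': either both trivial, both $A_3$, or both $S_3$.

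First I would dispatch the easy cases. If $N_1 = N_2 = S_3$, then $H = S_3 \times S_3$. The case $N_1 = N_2 = A_3$ is impossible: then $H$ has order $|A_3|\cdot|A_3|\cdot|S_3/A_3| = 3\cdot 3\cdot 2 = 18$, so $H$ has index $2$ in $S_3 \times S_3$; but $H$ contains $A_3 \times A_3$ and the quotient $(S_3\times S_3)/(A_3\times A_3) \cong \Z/2 \times \Z/2$ has three subgroups of index $2$, each of which is one of $S_3 \times A_3$, $A_3 \times S_3$, or $\{(s,t): \operatorname{sgn}(s)=\operatorname{sgn}(t)\}$; the first two fail to project onto both factors, and in the third, one checks directly that the only $2$-generated such groups — wait, actually the third one \emph{does} surject onto both factors, so I must handle it: but the third subgroup contains $(\one,(0,1,2))$, contradicting $N_2 = A_3$ only if $(0,1,2) \notin N_2$; in fact $(\one,(0,1,2)) \in H$ would put a $3$-cycle in $N_2$, giving $N_2 = A_3$ consistently — so this subgroup of order $18$ genuinely has $N_1=N_2=A_3$. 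Hence I cannot rule it out on order grounds alone; instead I rule it out because it is \emph{not} $2$-generated: a group of order $18$ of the form $A_3\times A_3$ extended by a diagonal $\Z/2$ is $(\Z/3)^2 \rtimes \Z/2$, and one checks its abelianization or its minimal number of generators is $\ge 3$ (the $\Z/2$ acts by simultaneous inversion on $(\Z/3)^2$, and $(\Z/3)^2$ is not cyclic, so two elements cannot generate). This is the one genuinely fiddly verification.

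The main case is $N_1 = N_2 = \{\one\}$. Then $H$ is the graph of an isomorphism $\phi\colon S_3 \to S_3$, i.e. $H = \{(s,\phi(s)) : s \in S_3\}$. Since $S_3$ has no outer automorphisms, $\phi$ is inner: $\phi(s) = t^{-1} s t$ for some $t \in S_3$. This gives exactly $H = \{(s, t^{-1} s t) : s \in S_3\}$, the desired form. I should note this form is consistent with $2$-generation (it is a copy of $S_3$, which is $2$-generated) and with surjectivity of both projections.

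I expect the main obstacle to be the careful elimination of the order-$18$ subgroup $K = \{(s,t) : \operatorname{sgn}(s) = \operatorname{sgn}(t)\}$, since it satisfies every hypothesis except $2$-generation. The clean way to finish that case is to observe $K \cong (\Z/3)^2 \rtimes \Z/2$ with the involution acting as $-\one$; any generating set must generate the $(\Z/3)^2$ part after killing the $\Z/2$ quotient, and $(\Z/3)^2$ requires two generators, so together with needing to hit the $\Z/2$ we need at least — actually $2$ can suffice for such an extension in general, so the sharper observation is that here the two $\Z/2$-conjugates of any element $x \in (\Z/3)^2$ are $x$ and $-x$, which span only a cyclic subgroup; hence the normal closure of any single element outside $(\Z/3)^2$ meets $(\Z/3)^2$ in a cyclic group, and two such elements can generate a $(\Z/3)^2$-part of rank at most... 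I would instead simply exhibit that all $2$-generated subgroups of $S_3 \times S_3$ surjecting onto both factors have order $6$ or $36$ by a short direct enumeration of the possible orders of $\langle (s_1,t_1),(s_2,t_2)\rangle$, which is quick and avoids the structural subtlety entirely. Everything else is routine.
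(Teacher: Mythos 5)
Your skeleton is the same as the paper's: set $N_1=\{s\colon (s,\one)\in H\}$, $N_2=\{t\colon (\one,t)\in H\}$, note these are normal in $S_3$, and split into the cases trivial, $A_3$, $S_3$; the trivial case gives the graph of an automorphism of $S_3$, hence inner, hence $H=\{(s,t^{-1}st)\}$, and the $S_3$ case gives $H=S_3\times S_3$, exactly as in the paper. The genuine divergence is the case $N_1=N_2=A_3$, and this is where each proof must (and does) use $2$-generation. The paper never names the order-$18$ subgroup: it argues with the generators themselves --- any generating pair of $S_3$ contains a transposition, parity-matching makes that generator a pair of transpositions, and after conjugating the second coordinate by $(\one,r)$ and then by $(\one,\one)$ or $(\one,s)$ both generators may be assumed diagonal, contradicting $N_2=A_3$. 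You instead identify the unique subgroup with $N_1=N_2=A_3$, namely $K=\{(s,t)\colon \operatorname{sgn} s=\operatorname{sgn} t\}\cong(\Z/3\Z)^2\rtimes \Z/2\Z$ with the involution acting by inversion, and exclude it because it is not $2$-generated. That claim is true, and the structural argument you leave dangling does close cleanly: every element of $K$ outside $A_3\times A_3$ is an involution acting on $(\Z/3\Z)^2$ by $-\one$, so two generators outside $A_3\times A_3$ generate a dihedral group of order at most $6$, one inside and one outside give order at most $6$, and two inside give a subgroup of $(\Z/3\Z)^2$; since $|K|=18$, no pair generates $K$, and the brute-force enumeration you fall back on is unnecessary (though legitimate). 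So the two proofs buy different things: the paper's conjugation trick is shorter and avoids discussing $K$ at all, while your route makes explicit \emph{which} subgroup the $2$-generation hypothesis is excluding and why, which is more transparent; your Goursat observation $|N_1|=|N_2|$ is also a small streamlining the paper obtains ad hoc. The only blemish is the unfinished sentence in the $K$ analysis, which, as noted, is easily completed rather than a real gap.
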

\begin{proof}
Let $N=\{s\colon (\one,s)\in H\}$.  Then $N$ is a normal subgroup of $S_3$ and hence must be one of $A_3$, $S_3$, or the trivial subgroup.  If $N$ is trivial then the first projection is an isomorphism and hence $H$ is isomorphic to $S_3$.  Thus $\pi_2\circ \pi_1^{-1}:S_3\to S_3$ is an automorphism of $S_3$.  Since all automorphisms of $S_3$ are inner, there exists some $t\in S_3$ such that $H=\{(s,tst^{-1})\colon s\in S_3\}$.  

If $N=A_3$, then we also know that $N'=\{s\colon (s,\one)\in H\}$ is $A_3$.  Thus if $(s,t)\in H$, $s$ and $t$ are either both even permutations or both odd permutations.  Every generating set for $S_3$ must contain a transposition and thus one generator of $H$ must be of the form $(s,t)$ with $s$ and $t$ transpositions.  By conjugating by a permutation in the second coordinate, we may assume that our first generator is $(s,s)$ for some transposition $s$.  Let $(t,u)$ be the second generator for $H$.  Then either $t$ and $u$ are both 3-cycles or they are both transpositions not equal to $s$.  In both cases, either $u=t$ or $u=sts$. By conjugating by either $(\one,\one)$ or  $(\one,s)$, we see that it is no loss of generality to assume that $H$ is generated by $(s,s)$ and $(t,t)$ for two elements of $S_3$.  But this contradicts the fact that $N=A_3$.

If $N=S_3$, then $H$ has size $36$ and hence must be $S_3\times S_3$. 
\qed
\end{proof}

\begin{proposition}
Theorem~\ref{thm: small} holds for $(m,n)=(3,3)$.
\end{proposition}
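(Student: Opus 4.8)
The plan is to combine the structural classification of Lemma~\ref{lem:s3} with the now-familiar pair-graph analysis used in Lemma~\ref{lem:main} and Corollary~\ref{cor:main}. Let $H$ be the transition semigroup of $\cP=\cA\times\cA'$, a $2$-generated subgroup of $S_3\times S_3$ whose two projections onto $S_3$ are surjective. Since $\cA\times\cA'$ is assumed connected, Theorem~\ref{thm: reach} rules out the case that the two bases are conjugate, and hence Lemma~\ref{lem:s3} forces $H=S_3\times S_3$. This is the key simplification: once $H$ is the full direct product, the pair graph of $\cP$ has exactly the three components $C_1,C_2,C_3$ as in Lemma~\ref{lem:main} (the hypothesis $n\ge 5$ there was only used to invoke Lemma~\ref{lem: bertrand}, but with $H=S_3\times S_3$ acting on $Q_3\times Q_3$ the double transitivity needed for the connectivity of each $C_i$ is immediate, since $S_3$ acts $2$-transitively on $Q_3$ and the relevant point-stabilizers in $\pi_2$ of the $\pi_1$-stabilizers are all of $S_3$ or at least act transitively on $Q_3$).

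First I would verify directly that each of $C_1$, $C_2$, $C_3$ is strongly connected under the action of $S_3\times S_3$, imitating the three paragraphs of the proof of Lemma~\ref{lem:main}: for $C_1$ use that $S_3\times S_3$ acts transitively on ordered pairs of distinct pairs with distinct coordinates in each slot; for $C_2$ and $C_3$ use transitivity of $S_3$ in the fixed coordinate together with the fact that the subgroup of $H$ fixing one (or two) points in the $\pi_1$-coordinate still surjects onto $S_3$ under $\pi_2$. Since $H=S_3\times S_3$, all of these statements are straightforward. Then I would invoke Lemma~\ref{lem:dist}: $\cP$ is minimal if and only if each of $C_1,C_2,C_3$ contains a distinguishing pair.

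Next I would run the argument of Corollary~\ref{cor:main} verbatim. For $C_2$ and $C_3$, the existence of a distinguishing pair follows immediately from $\circ$ being a proper boolean function (if $C_2$ had no distinguishing pair, the final-state set of $\cP$ would be a union of ``columns'' $\{i\}\times Q_n$, making $\circ$ a function of the first argument only; similarly $C_3$ forces dependence only on the second argument). For $C_1$, the same counting argument works since $m=n=3\ge 3$: assuming $f((0,0))=0$ and that $C_1$ has no distinguishing pair, one propagates $f\equiv 0$ across all of $Q_3\times Q_3$, contradicting properness. Hence all three components have distinguishing pairs and $\cP$ is minimal.

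The main obstacle here is purely the case $H\neq S_3\times S_3$, that is, ensuring that Lemma~\ref{lem:s3} genuinely eliminates it: one must check that a proper $2$-generated subgroup of $S_3\times S_3$ with surjective projections is exactly a ``twisted diagonal'' $\{(s,t^{-1}st)\colon s\in S_3\}$, and that such a subgroup makes $\cA\times\cA'$ disconnected (equivalently, the bases conjugate), so that the connectivity hypothesis of the proposition excludes it. This is precisely the content of Lemma~\ref{lem:s3} together with Theorem~\ref{thm: reach}, so once those are in hand the remainder is a routine specialization of the general-case proof; the only mild subtlety is confirming that the double-transitivity inputs to the $C_i$-connectivity arguments survive for $n=3$, which they do since $S_3$ is $2$-transitive on $Q_3$ and $A_3$ is transitive on $Q_3$.
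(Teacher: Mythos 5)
Your proof is correct, but it follows a different route than the paper's for this case. Both arguments begin the same way: connectedness of $\cA\times\cA'$ together with Lemma~\ref{lem:s3} (the twisted diagonal $\{(s,t^{-1}st)\}$ corresponds exactly to conjugate bases, which Theorem~\ref{thm: reach} excludes) forces $H=S_3\times S_3$. After that the paper does not return to the pair graph at all: it applies Remark~\ref{rem:equiv}, notes that any equivalence class of indistinguishable states has size dividing $9$, hence size $3$, and then kills the class $X\ni(0,0)$ by a short case analysis --- double transitivity of $S_3\times S_3$ on each coordinate shows every element of $X$ lies in row $0$ or column $0$, a $3$-cycle in one coordinate rules out a mixed class, and $X=\{0\}\times Q_3$ or $Q_3\times\{0\}$ contradicts properness of $\circ$. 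You instead rerun the general machinery: with $H=S_3\times S_3$ the stabilizer images under $\pi_2$ are all of $S_3$, so the double-transitivity inputs to Lemma~\ref{lem:main} hold without Lemma~\ref{lem: bertrand}, the pair graph has exactly the three components $C_1,C_2,C_3$, and the distinguishing-pair argument of Corollary~\ref{cor:main} (which only needs $m,n\ge 3$ for $C_1$, and properness of $\circ$ for $C_2,C_3$) finishes via Lemma~\ref{lem:dist}. Your checks of these points are accurate, so the argument is sound. What your route buys is uniformity: the $(3,3)$ case becomes a literal specialization of the general proof, with the group-theoretic input replaced by the trivial observation that $H$ is the full product; what the paper's route buys is brevity and consistency with how it handles the other small cases $(2,3)$ and $(2,4)$, where the transition group need not be the full product and the partition-size argument of Remark~\ref{rem:equiv} is the common template.
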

\begin{proof}
Let $H$ denote the transition semigroup of $\cP$.  Since $\cA\times \cA'$ is connected,  $H=S_3\times S_3$ by Lemma \ref{lem:s3}.

Suppose  there is a proper boolean function $\circ$ for which $\cP$ is not minimal.  
Each equivalence class of Remark~\ref{rem:equiv} has size equal to a divisor of $9$, and we can ignore the cases where that size is 1 or 9; hence the size must be 3. 

Let $X\subseteq Q_3\times Q_3$ be the part in the partition of $Q_3\times Q_3$ that contains $(0,0)$ and let $(i,j)\neq (0,0)$ be another element of $X$.  
Since $|X|=3$, there exists $(k,\ell)\in \{1,2\}\times \{1,2\}$ that is not in $X$.  
If $i\neq 0$ and $j\neq 0$, then $H=S_3\times S_3$ acts doubly transitively on $Q_3\times Q_3$. 
Hence there exists $s\in H$ such that $s\cdot (0,0)=(0,0)$ and $s\cdot (i,j)=(k,\ell)$, which is a contradiction since either $s\cdot X = X$ or $(s\cdot X)\cap X$ is empty.   
We conclude that if $(i,j)\in X$, then either $i=0$ or $j=0$.  
Next suppose that $X$ contains an element of the form $(0,j)$ with $j\neq 0$  and an element of the form $(i,0)$ with $i\neq 0$.  
Then
$X=\{(0,0),(0,j),(i,0)\}$.  
If we let $(s,t)\in H$ be the pair in which $s$ is the identity and $t$ is a 3-cycle that sends $0$ to $j$, then  $((s,t)\cdot X)\cap X$ has size 1, a contradiction, since it is either all of $X$ or empty.  We conclude that $X$ is either $\{0\}\times Q_3$ or $Q_3\times \{0\}$.  But then $\circ$ is a constant function and hence not proper.   The result follows.
\qed
\end{proof}
\subsection{$(m,n)=(2,3)$}
\begin{proposition}
Theorem~\ref{thm: small} holds for $(m,n)=(2,3)$.
\end{proposition}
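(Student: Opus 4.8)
The plan is to start by pinning down the transition semigroup $H$ of $\cP$, which is a subgroup of $S_2\times S_3$ whose two coordinate projections $\pi_1,\pi_2$ onto $S_2$ and $S_3$ are both surjective. Since $|S_2\times S_3|=12$ and surjectivity of $\pi_2$ forces $6$ to divide $|H|$, we have $|H|\in\{6,12\}$: either $H=S_2\times S_3$, or $|H|=6$, in which case $\pi_2$ is an isomorphism and $H=\{(\mathrm{sgn}(s),s)\colon s\in S_3\}$, where $\mathrm{sgn}\colon S_3\to S_2$ is the unique surjective homomorphism. In either case $H$ contains $g:=(\one,(0,1,2))$, which has order $3$ and acts on the six states $Q_2\times Q_3$ as a product of two $3$-cycles, one on each row $\{i\}\times Q_3$. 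I would then assume, for contradiction, that $\cP$ is not minimal, and invoke Remark~\ref{rem:equiv}: the relation $\sim$ is an $H$-invariant partition of $Q_2\times Q_3$ into classes of equal size, with sizes $1$ and $6$ already excluded, so every class has size $2$ or $3$.

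Suppose first that every class has size $3$. Then $g$, having order $3$ and permuting the two classes, must fix each of them; so each class is a union of $g$-orbits and hence must be exactly one of the rows $\{0\}\times Q_3$, $\{1\}\times Q_3$. Writing $f(i,j)=(i\in F)\circ(j\in F')$ for the finality function of $\cP$, constancy of $f$ on each row means $f$ depends only on $i$; since $F'$ is a proper nonempty subset of $Q_3$ the value $[j\in F']$ takes both truth values, and since $F$ is a proper nonempty subset of $Q_2$ so does $[i\in F]$, whence $x\circ 1=x\circ 0$ for $x\in\{0,1\}$ and $\circ$ is a function of its first argument only --- contradicting properness.

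Next suppose every class has size $2$. Then $g$ must act as a $3$-cycle on the three classes; tracking the class $X_0\ni(0,0)$ under powers of $g$ and checking that the second element of $X_0$ cannot lie in the row $\{0\}\times Q_3$ (else two classes would overlap), I obtain that the classes are exactly $X_a=\{(0,a),(1,a+j_0)\}$, $a\in Q_3$, for some fixed $j_0\in Q_3$. If $H=S_2\times S_3$, then invariance of this partition under $((0,1),\one)\in H$ forces $2j_0\equiv 0\pmod 3$, i.e.\ $j_0=0$; the classes are then $\{(0,a),(1,a)\}$, so $f$ depends only on $j$, and since $F$ is a proper nonempty subset of $Q_2$ this makes $\circ$ a function of its second argument only, again a contradiction. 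If instead $H=\{(\mathrm{sgn}(s),s)\colon s\in S_3\}$, all three values of $j_0$ yield $H$-invariant partitions, so no group-theoretic argument eliminates them; here I would argue directly from the form of $f$. Setting $g_0(y)=0\circ y$, $g_1(y)=1\circ y$, and $\chi(j)=[j\in F']$, the identity $f(0,a)=f(1,a+j_0)$ becomes $g_1(\chi(a))=g_0(\chi(a+j_0))$ for all $a\in Q_3$ when $F=\{0\}$ (the case $F=\{1\}$ is symmetric, replacing $j_0$ by $-j_0$). Since $\chi$ is non-constant on $Q_3$, a short case analysis --- reducing the pattern of $(\chi(0),\chi(1),\chi(2))$ to $(0,0,1)$ or $(0,1,1)$ up to cyclic shift --- shows that $g_0=g_1$ when $j_0=0$, and that $g_0$ and $g_1$ are both constant when $j_0\neq 0$; in either situation $\circ$ is not proper, the desired contradiction.

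The hard part is precisely this last sub-case, $H=\{(\mathrm{sgn}(s),s)\colon s\in S_3\}$ with classes of size $2$: unlike the rest of Section~\ref{sec:main} and the earlier parts of Section~\ref{sec:small}, the ``twisted diagonal'' group admits several pairwise different invariant block systems on $Q_2\times Q_3$, so minimality cannot be deduced from connectivity of a pair graph or from transitivity properties of $H$ alone --- one is forced to exploit the rectangular shape of the final-state set $F\circ F'$ explicitly in order to conclude that $\circ$ must be degenerate.
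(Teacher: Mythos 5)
Your proof is correct and follows essentially the same route as the paper's: invoke Remark~\ref{rem:equiv} to get an $H$-invariant partition into classes of size $2$ or $3$, use the element $(\one,(0,1,2))\in H$ to show the only candidate partitions are the two rows or the three shifted pairs $\{(0,a),(1,a+j_0)\}$, and then contradict properness of $\circ$ using the rectangular form of $F\circ F'$. Your $\chi,g_0,g_1$ case analysis is the same computation as the paper's final-state chase (the paper avoids your split on $|H|$ by relabelling so that $j_0=1$ and arguing uniformly), so the differences are purely organizational.
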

\begin{proof}
Let $H$ denote the transition semigroup of $\cP$.  Then the natural projections from $H$ to $S_2$ and $S_3$ are both surjective.  In particular, $H$ has size either $6$ or $12$, and it can be verified that it contains all elements of the form $(s,t)$ in which $s$ and $t$ are either both even or both odd.  

Suppose  there is a proper boolean function $\circ$ for which $\cP$ is not minimal.  
Each equivalence class of Remark~\ref{rem:equiv} has size equal to a divisor of $6$, and we can ignore the cases where that size is 1 or 6; hence the size must be 2 or 3.    

Let $X\subseteq Q_2\times Q_3$ be a part in our partition.  
If there exist $i\in Q_2$ and distinct $j,k\in Q_3$ such that $(i,j),(i,k)\in X$, then by relabelling we may assume that $i=0$ and $j=0$, $k=1$.  
Since $u=(\one, (0,1,2))\in H$, and $u\cdot (0,0)=(0,1)$, we see that $u\cdot X=X$ and so $X\supseteq \{0\}\times Q_3$.  
Since $|X|\le 3$,  the partition consists of the two parts $\{0\}\times Q_3$ and $\{1\}\times Q_3$, contradicting the fact that $\circ$ is proper.  
It follows that $|X|=2$ and each part of our partition consists of an element of the form $(0,i)$ and an element of the form $(1,j)$ for some $i,j\in Q_3$.  We cannot have $i=j$, since then  $\circ$ would be a constant function.  By relabelling if necessary, we may assume that $X_0=\{(0,0),(1,1)\}$ is one part of our partition.  Letting $u=(\one,(0,1,2))$ act on $X$, we see that
$X_1=\{(0,1),(1,2)\}$ and $X_2=\{(0,2),(1,0)\}$ are the remaining parts that make up our partition of $Q_2\times Q_3$.  It is no loss of generality to assume that exactly two elements of $Q_3$ are final states.  Let $i,j$ be these two final states of $Q_3$. Since $(0,i)$ and $(0,j)$ are either both final or both non-final, either all states in $X_i\cup X_j$ are final or none of them are.  
It is no loss of generality to assume that all states of $X_i\cup X_j$ are final.  
But  $(1,i+1)$ and $(1,j+1)$ are both final, where $i+1$ and $j+1$ are taken modulo $3$.  Then also $(1,i)$ and $(1,j)$ are either both final or both non-final.  Since, modulo $3$, $\{i,j\}\cap \{i+1,j+1\}$ is non-empty and $\{i,j,i+1,j+1\}=\{0,1,2\}$,  all states of the form $\{1\}\times Q_3$ are final, and thus all states are final, contradicting that $\circ$ is proper.  The result follows.
\qed
\end{proof}
\section{$(m,n)=(2,4)$}
\begin{proposition}
Theorem~\ref{thm: small} holds for $(m,n)=(2,4)$.
\end{proposition}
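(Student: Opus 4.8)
The plan is to follow the pattern of the other cases of Theorem~\ref{thm: small}: determine the transition semigroup $H$ of $\cP$, then use Remark~\ref{rem:equiv} to analyse the $H$-invariant partitions of $Q_2\times Q_4$ into equal-size parts, and finally show that each nontrivial such partition forces $\circ$ to be improper. To pin down $H\le S_2\times S_4$, observe that the projections $\pi_1\colon H\to S_2$ and $\pi_2\colon H\to S_4$ are surjective, so $N=\{t\in S_4\colon(\one,t)\in H\}$ is a normal subgroup of $S_4$ with $|N|=12\cdot|\{s\in S_2\colon(s,\one)\in H\}|$; since the latter factor is $1$ or $2$, either $N=S_4$ and $H=S_2\times S_4$, or $N=A_4$, $\pi_2$ is an isomorphism, and $H=\{(\operatorname{sgn}(t),t)\colon t\in S_4\}$, where $\operatorname{sgn}(t)$ is $\one$ when $t$ is even and the transposition $(0,1)$ when $t$ is odd. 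In both cases $H$ acts transitively on the eight states of $\cP$, consistent with $\cP$ being connected.

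Assume next that $\cP$ is not minimal for some proper $\circ$. By Remark~\ref{rem:equiv} there is an $H$-invariant partition of $Q_2\times Q_4$ into blocks of common size $d$ dividing $8$ with $d\in\{2,4\}$; let $B$ be the block containing $(0,0)$, which must be a union of orbits of $\operatorname{Stab}_H((0,0))$. In both cases for $H$ these orbits have sizes $1,3,1,3$ with $(0,0)$ a singleton, so $B$ is one of $\{(0,0),(1,0)\}$, $\{0\}\times Q_4$, or $\{(0,0)\}\cup(\{1\}\times\{1,2,3\})$; a short computation (apply a suitable element of $H$ and observe that the image meets $B$ but differs from it) rules out the last, while the first two genuinely are blocks, yielding the partitions $\{\,\{0,1\}\times\{j\}\colon j\in Q_4\,\}$ and $\{\{0\}\times Q_4,\ \{1\}\times Q_4\}$. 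In the first, $(0,j)$ and $(1,j)$ are always simultaneously final or non-final, so membership in the final set of $\cP$ is independent of the $Q_2$-coordinate and $\circ$ is a function of its second argument alone; in the second, membership depends only on the $Q_2$-coordinate and, using that $F'$ is a proper nonempty subset of $Q_4$, $\circ$ is a function of its first argument alone. Either conclusion contradicts properness, so $\cP$ is minimal.

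The main obstacle I anticipate is the stabiliser-orbit bookkeeping when $H=\{(\operatorname{sgn}(t),t)\colon t\in S_4\}$: here one must be careful that $\operatorname{Stab}_H((0,0))$ is $\{(\one,t)\colon t\in A_4,\ t(0)=0\}$, of order $3$ rather than $6$, and then verify directly which unions of its orbits are genuine blocks of the $H$-action. The remaining verifications — the order computation fixing $H$, and the two ``improper $\circ$'' arguments — are routine, the latter being essentially the same bookkeeping already used for $(m,n)=(2,3)$.
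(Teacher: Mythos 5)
Your proof is correct, and it reaches the same contradictions as the paper but by a noticeably more structural route. The paper only records that $H$ contains every pair $(s,t)$ with $s,t$ of equal parity, and then argues directly on an equivalence class $X$ of Remark~\ref{rem:equiv} by pushing explicit elements: $(\one,(0,1,2))$ and $(\one,(0,1,3))$ force any class with two entries in one row to swallow the whole row, and $((0,1),(0,1,2,3))$ kills the size-two class $\{(0,0),(1,1)\}$. You instead pin down $H$ exactly (either $S_2\times S_4$ or the graph of the sign map --- your coset count $|N|=12\,|\ker\pi_2|$ is right, and parity rules out $K_4$ and the trivial group), observe that the classes form a system of blocks, and enumerate the block through $(0,0)$ as a union of orbits of its stabilizer (sizes $1,3,1,3$ in both cases, as you note). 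This buys a cleaner case list: the paper's ``$i\neq j$'' size-two case is excluded for free because $\{(0,0),(1,1)\}$ is not a union of stabilizer orbits, and the two surviving candidates are exactly the row and column partitions, which you correctly dispose of via properness of $\circ$ (note that in the column-partition case you need $F$ proper and nonempty as well as $F'$; both hold by the standing assumption $\emp\subsetneq F\subsetneq Q_2$, $\emp\subsetneq F'\subsetneq Q_4$). Two details are left implicit but are easily supplied: the ``suitable element'' ruling out $\{(0,0)\}\cup(\{1\}\times\{1,2,3\})$ can be taken to be $((0,1),(0,1))$, whose image of that set meets it without equalling it; and one should say explicitly that transitivity of $H$ makes every class an $H$-translate of $B$, so $B$ determines the whole partition. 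With those lines added, your argument is a complete and slightly more systematic alternative to the paper's element-by-element computation, at the cost of invoking a bit more group-action machinery (Goursat-type identification of $H$, blocks of imprimitivity).
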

\begin{proof}
Let $H$ denote the transition semigroup of $\cP$.  Then the natural projections from $H$ to $S_2$ and $S_4$ are both surjective and so $H$ has size either $24$ or $48$, and  contains all elements of the form $(s,t)$ in which $s$ and $t$ are either both even or both odd.

Suppose there is a proper boolean function $\circ$ for which $\cP$ is not minimal.  
Each equivalence class of Remark~\ref{rem:equiv} has size equal to a divisor of $8$, and we can ignore the cases where that size is 1 or 8; hence the size must be 2 or 4. 

Let $X\subseteq Q_2\times Q_4$ be a part in our partition.  
If there exist $i\in Q_2$ and distinct $j,k\in Q_4$ such that $(i,j),(i,k)\in X$, then by relabelling we may assume that $i=0$ and $j=0$, $k=1$.  
Since $u=(\one, (0,1,2))\in H$, and $u\cdot (0,0)=(0,1)$, we see that $u\cdot X=X$ and so $X\supseteq \{0\}\times \{0,1,2\}$.  
Similarly, $v=(\one, (0,1,3))\in H$, and $v\cdot (0,0)=(0,1)$; hence $v\cdot X=X$ and so $X\supseteq \{0\}\times \{0,1,3\}$.  
Thus $X\supseteq \{0\}\times Q_4$.  
Since $|X|\le 4$, our partition must consist of the two parts $\{0\}\times Q_4$ and $\{1\}\times Q_4$, which contradicts the fact that $\circ$ is proper.  

Thus  $|X|=2$ and each part of our partition consists of an element of the form $(0,i)$ and an element of the form $(1,j)$ for some $i,j\in Q_4$.  We  cannot have $i=j$, since then  $\circ$ would be a constant function.  Thus, by relabelling if necessary, we may assume that $X=\{(0,0),(1,1)\}$ is one part of our partition.  But $u=((0,1),(0,1,2,3))\in H$, and since $u\cdot (0,0)=(1,1)$, we see that $u\cdot X=X$.  However $u\cdot (1,1)=(0,2)\not\in X$, a contradiction.  The result follows.
\qed
\end{proof}
\section{Conclusions}
\label{sec:conc}
We have shown that if the inputs of two DFAs induce transformations that constitute non-conjugate bases of symmetric groups, then the quotient complexity of all non-trivial boolean operations on the languages accepted by the DFAs is maximal, except for a few special cases when the sizes of the DFAs are small. We believe that other similar results are possible and deserve further study.
\smallskip 

\noin
{\bf Acknowledgment}
We thank Gareth Davies for his careful reading of our manu\-script and for his constructive comments.

\providecommand{\noopsort}[1]{}

\end{document}